\newcommand{\bmp}{\begin{minipage}}
\newcommand{\emp}{\end{minipage}}
\newcommand{\Id}{\mbox{\it Id}}
\newcommand{\SP}{{\it Prec}_{\pi}}
\newcommand{\SPp}{{\it Prec}_{\pi'}}
\newcommand{\N}{{\it Next}_{\pi}}
\newcommand{\Np}{{\it Next}_{\pi'}}
\newcommand{\A}{A_{\pi}}
\newcommand{\Ap}{A_{\pi'}}
\newcommand{\D}{D_{\pi}}
\newcommand{\Dp}{D_{\pi'}}
\newcommand{\cutA}{{\it cutA}_{\pi}}
\newcommand{\cutAp}{{\it cutA}_{\pi'}}
\newcommand{\cutD}{{\it cutD}_{\pi}}
\newtheorem{thm}{Theorem}
\newtheorem{cor}{Corollary}
\newtheorem{fait}{Claim}
\newtheorem{lemma}{Lemma}
\newtheorem{rmk}{Remark}
\newtheorem{defin}{Definition}
\newtheorem{ex}{Example}
\newtheorem{pb}{Problem}
\newtheorem{prop}{Proposition}
\newcommand{\bp}{\begin{pb}\rm}
\newcommand{\ep}{\end{pb}}
\newcommand{\br}{\begin{rmk}\rm}
\newcommand{\er}{\end{rmk}}
\newcommand{\bdefin}{\begin{defin}\rm}
\newcommand{\edefin}{\end{defin} }
\newcommand{\bex}{\begin{ex}\rm}
\newcommand{\eex}{\end{ex}}
\newcommand{\bthm}{\begin{thm}}
\newcommand{\ethm}{\end{thm}}
\newcommand{\bcor}{\begin{cor}}
\newcommand{\ecor}{\end{cor}}
\newcommand{\bfn}{\begin{fait}}
\newcommand{\efn}{\end{fait}}
\begin{document}
%\begin{frontmatter}

\begin{center}
\large{\bf Sorting Permutations with Fixed Pinnacle Set}
\bigskip

{Irena Rusu\footnote{Corresponding author. Phone: 033.2.51.12.58.16 Fax: 033.2.51.12.58.12 Email: Irena.Rusu@univ-nantes.fr}}
\bigskip

{\small\it LS2N, UMR 6004, Universit\'e de Nantes, 2 rue de
la Houssini\`ere, BP 92208, 44322 Nantes, France}
\end{center}

\begin{abstract}
We give a positive answer to a question raised by Davis et al. ({\em Discrete Mathematics} 341, 2018), concerning permutations
with the same pinnacle set. Given $\pi\in S_n$, a {\em pinnacle} of $\pi$ is an element $\pi_i$ ($i\neq 1,n$) such that
$\pi_{i-1}<\pi_i>\pi_{i+1}$. The question is: given $\pi,\pi'\in S_n$ with the same pinnacle set $S$, is there a sequence of
operations that transforms $\pi$ into $\pi'$ such that all the intermediate permutations have pinnacle set $S$?
We introduce {\em balanced reversals}, defined as reversals that do not modify the pinnacle set of the permutation 
to which they are applied. Then we show that $\pi$ may be sorted by balanced reversals (i.e. transformed into a standard 
permutation $\Id_S$), implying that $\pi$ may be transformed into $\pi'$ using
at most $4n-2\min\{p,3\}$ balanced reversals, where $p=|S|\geq 1$. In case $p=0$, at most $2n-1$ balanced reversals are needed.
\medskip

{\bf Keywords.} permutation; reversal; pinnacle set
\end{abstract}

%\begin{keyword}

%\end{keyword}

\section{Introduction}
In a permutation $\pi=(\pi_1\, \pi_2\, \ldots\, \pi_n)$ from the symmetric group $S_n$, a {\em peak} is an index 
$i\neq 1,n$ such that $\pi_{i-1}<\pi_i> \pi_{i+1}$, whereas a {\em valley} is an index $i\neq 1,n$ such that 
$\pi_{i-1} >\pi_i<\pi_{i+1}$. {\em Descents} and {\em ascents}  respectively identify indices
$i$ such that $\pi_i>\pi_{i+1}$ and $\pi_i<\pi_{i+1}$.

Many studies have been devoted to the combinatorics of peaks, especially to enumeration and counting problems
\cite{aguiar2006new, billera2003peak,billey2013permutations,diaz2017proof,kitaev2006counting,schocker2005peak} (and many others). They
identify strong and elegant relationships between peaks or descents in permutations, on the one hand, and Fibonacci numbers, Eulerian numbers, 
chains in Eulerian posets, etc. on the other hand.   

In \cite{davis2018a}, Davis et al. revive the point of view considered in \cite{bouchard2010value},
and propose to identify peaks by their values rather than by their positions. They call a {\em pinnacle}
any {\em element} $\pi_i$ with $i\neq 1,n$ such that $\pi_{i-1}<\pi_i> \pi_{i+1}$, and show that considering pinnacles instead
of  peaks changes the combinatorial considerations behind counting and enumerating permutations with a given
peak/pinnacle set. They characterize the sets of integers that may be the pinnacle set of a permutation
(so-called {\em admissible pinnacle sets}), count them,
and propose bounds, involving the Stirling number, on the numbers of permutations with given pinnacle set.

They further ask several questions, one of which is considered in this paper:
\smallskip

{\bf Question 4.2 in \cite{davis2018a}.} For a given admissible pinnacle set $S$, is there a class of operations that one
may apply to any $\pi\in S_n$  whose pinnacle set is $S$ to obtain any other permutation $\pi'\in S_n$ with the same
pinnacle set, and no other permutation?
\medskip

This question is motivated by the search for similarities between pinnacles and peaks \cite{davis2018a}. 
In this paper we give a positive answer to this question. More particularly, we identify a reduced set of reversals
(the operation that reverses a block of a permutation) - called balanced reversals - which do not modify the set of pinnacles.
Then we show that it is possible to transform any permutation
with pinnacle set $S$ into a canonical permutation of the same size with pinnacle set $S$ by applying a sequence of at most $2n-\min\{p,3\}$ balanced reversals.
As the inverse transformation is always possible, this answers Question 4.2. above.

The paper is organized as follows. Section \ref{sect:def} contains the main definitions and notations. In Section~\ref{sect:main}
we identify balanced reversals and state the main results. 
Section~\ref{sect:proofthm} is devoted to the proof of our main theorem. This proof describes the algorithm allowing us to
find the sequence  of reversals transforming a given permutation into the canonical permutation. For the sake of completeness, 
we give in Section \ref{sect:algo} the implementation details
for an optimal running time of our algorithm. Section \ref{sect:conclusion} is the conclusion.

\section{Definitions and notations}\label{sect:def}

Permutations $\pi,\pi',\pi''$ we use in the paper belong to the symmetric group $S_n$, for a given integer $n$. 
Elements $n+1$ and $n+2$ are artificially added at the beginning and respectively the end of each permutation, so
that a permutation $\pi\in S_n$ is written as $\pi=(\pi_0\, \pi_1\, \pi_2\, \ldots\, \pi_n\, \pi_{n+1})$
with $\pi_0=n+1$ and $\pi_{n+1}=n+2$.  For each $i>0$, we define $\SP(\pi_i)=\pi_{i-1}$ and for each $i<n+1$
we define $\N(\pi_i)=\pi_{i+1}$. The {\em block} of $\pi$ with endpoints $\pi_a$ and $\pi_b$ (where $a\leq b$)
is defined as $(\pi_a\, \pi_{a+1}\, \ldots \pi_{b-1}\, \pi_b)$.

A {\em pinnacle} is any {\em element} $\pi_i$ with $i\neq 0, n+1$ such that $\pi_{i-1}<\pi_i> \pi_{i+1}$.
Similarly to pinnacles (whose indices are the peaks), we define the dells (whose indices are the valleys).
A {\em dell} of $\pi$ is any {\em element} $\pi_i$ with $i\neq 0,n+1$ such that $\pi_{i-1}>\pi_i< \pi_{i+1}$.
The {\em shape} of the permutation $\pi$ is the permutation $B_{\pi}=(y_0\, v_1\, y_1\, v_2\, y_2\, \ldots, y_p\, v_{p+1}\, y_{p+1})$
where $v_1, \ldots, v_{p+1}$ are the dells of $\pi$, $y_1, \ldots, y_p$ are its pinnacles, whereas $y_0=\pi_0=n+1$
and $y_{p+1}=\pi_{n+1}=n+2$. The presence of the elements $n+1$ and $n+2$ at the beginning and the end of the permutation adds 
no pinnacle to the initial permutation, and ensures that dells $v_1, v_{p+1}$ exist.

Moreover, let $\A(v_i,y_i)$ with $1\leq i\leq p+1$ be the set of elements in the block of $\pi$ with endpoints 
$v_i$ and $y_i$, which are neither a dell nor a pinnacle. Similarly let $\D(y_i,v_{i+1})$ with $0\leq i\leq p$ be the set 
of elements in the block of $\pi$ with endpoints $y_i$ and $v_{i+1}$, which are neither a dell nor a pinnacle. 
Sets $\A()$ and $\D()$ are respectively called {\em ascending} and {\em descending sets} of $\pi$. Note that
$y_0$ and $y_n$ belong respectively to the leftmost descending and the rightmost ascending set. The  dells
and pinnacles belong to no such set.

\bex
Consider $\pi=(11\, 8\, \underline{6}\, \overline{7}\, 4\, 3\, 2\, \underline{1}\, 5\, \overline{10}\, \underline{9}\, 12)$ 
from $S_{10}$  (thus $n=10$)   with elements 11 and 12 artificially added. Dells are underlined, pinnacles are overlined, $p=2$.
The shape is $B_{\pi}=(11\, \underline{6}\, \overline{7}\, \underline{1}\, \overline{10}\, \underline{9}\, 12)$. The ascending sets
are  $\A(6,7)=\emptyset$, $\A(1,10)=\{5\}$ and $\A(9,12)=\{12\}$, whereas the descending sets are  $\D(11,6)=\{8,11\}$, 
$\D(7,1)=\{2, 3, 4\}$ and $\D(10,9)=\emptyset$.
\label{ex:ex1}
\eex

We define a canonical permutation according to  \cite{davis2018a}. Given a set $S=\{s_1, s_2, \ldots, s_d\}$ 
and an integer $n>2d$, the {\em canonical permutation $\Id_S\in S_n$ with pinnacle set $S$} is the permutation built 
as follows:  place the elements of $S$ in increasing order on positions $2, 4, \ldots, 2d$ respectively; then place the 
elements in $\{1, 2, \ldots, n\}\setminus S$ in increasing order on positions $1, 3, \ldots, 2d-1, 2d+1, \ldots, n$.
With our convention, elements $n+1$ and $n+2$ are added at the beginning and respectively the end of $\Id_S$. 

\bdefin
Let $w_1, w_2$ be two elements of $\pi$, such that $w_1=\pi_a, w_2=\pi_b$ and $0<a\leq b<p+1$. The {\em reversal $\rho(w_1,w_2)$}
is the operation that transforms $\pi=(\pi_0\, \ldots\, \pi_{a-1}\, \underline{\pi_a\, \ldots\, \pi_b\,} \pi_{b+1}\, \ldots\, \pi_{n+1})$ into
$\pi'=(\pi_0\, \ldots\,  \pi_{a-1}\, \underline{\pi_b\, \pi_{b-1}\, \ldots\, \pi_{a+1}\, \pi_a\,} \pi_{b+1}\, \ldots\, \pi_{n+1})$. 
Notation: $\pi'=\pi\cdot\rho(w_1,w_2)$.
\label{def:rev}
\edefin

\bex
With $S=\{7,10\}$, the canonical permutation $\Id_S\in S_{10}$ is $\Id_S=(11\, \underline{1}\, \overline{7}\, \underline{2}\, 
\overline{10}\, \underline{3}\, 4\, 5\, 6\, 8\,$ $ 9\, 12)$, with shape $(11\, \underline{1}\, \overline{7}\, \underline{2}\, \overline{10}\, \underline{3}\, 12)$.
Then $S$ has the same pinnacle set as $\pi$ in Example \ref{ex:ex1}, but not the same dells and thus not the same shape.
Applying $\rho(1,10)$ to $\Id_S$ yields the permutation $\Id_S\cdot\rho(1,10)=(11\, 10\, \underline{2}\, \overline{7}\, \underline{1}\, 
3\, 4\, 5\, 6\, 8\, 9\, 12)$. It may be noticed that the resulting permutation has pinnacle set $\{7\}$, showing that reversals may modify the pinnacle
set.
\label{ex:revpinnset}
\eex

\bdefin
Let $\pi\in S_n$. A reversal $\rho(w_1,w_2)$ is a {\em balanced reversal} for $\pi$  if $\pi$ and $\pi\cdot \rho(w_1,w_2)$
have the same pinnacle set. 
\edefin

Balanced reversals are characterized in the next section. In order to identify appropriate balanced reversals when needed,
we make use of cutpoints. Let $i$ be an integer with $1\leq i\leq p+1$ and $z$ be an element of $\pi$ not belonging to 
$\A(v_i,y_i)$, such that $v_i<z<y_i$. The largest element $e$ of  
$\A(v_i,y_i)\cup\{v_i\}$ such that $e<z$  is called the {\em cutpoint}  of $z$ on $\A(v_i,y_i)$ and  is denoted $\cutA(z,v_i, y_i)$. 
The similar definition holds for $\D(y_i,v_{i+1})$. Let $i$ be an integer with $0\leq i\leq p$ and $z$ be an 
element of $\pi$ not belonging to  $\D(y_i,v_{i+1})$, such that $v_{i+1}<z<y_i$.  
The largest element $e$ of  
$\D(y_i,v_{i+1})\cup\{v_{i+1}\}$ such that $e<z$  is called the {\em cutpoint}  of $z$ on $\D(y_i,v_{i+1})$ 
and  is denoted $\cutD(z,y_i, v_{i+1})$. 

Finally, define the following problem:
\bigskip

\noindent{\sc Balanced Sorting Problem}

\noindent{\bf Input:} A permutation $\pi\in S_n$ with pinnacle set $S$.

\noindent{\bf Question:} Is it possible to transform $\pi$  into $\Id_S\in S_n$ using only balanced reversals?
\bigskip

The difficulty in solving this problem has mainly two origins: first, one cannot perform any wished reversal since
a reversal is not necessarily balanced (see Example \ref{ex:revpinnset}); and second, given a set $S$ of pinnacles and a permutation $\sigma$ of the elements
in $S$, it is possible that no permutation $\pi$ of given size $n$ and with pinnacle $S$ exists that has the pinnacles 
in the order (from left to right) given by $\sigma$.

\bex
Let set $S=\{3, 5, 7\}$. Then with $n=7$ and $\sigma=(3\, 5\, 7)$ we may find the permutation
$\pi=(8\, \underline{2}\, \overline{3}\, \underline{1}\, \overline{5}\, \underline{4}\, \overline{7}\, \underline{6}\, 9)$, 
but with $n=7$ and $\sigma=(3\, 7\, 5)$ there is no permutation from $S_n$ with pinnacles in this order. 
\label{ex:3}
\eex

Therefore, the {\sc Balanced Sorting Problem} is a question  of feasibility in the first place. The 
optimal sorting is proposed as an open problem in the conclusion.

\section{Main results}\label{sect:main}

Let $\pi\in S_n$ be a permutation with pinnacle set $S$ such that $|S|=p$.  The main result of the paper is the following one.

\bthm
There is a sequence $R$ that solves the {\sc Balanced Sorting Problem} on $\pi$ using at most $2n-\min\{p,3\}$ 
balanced reversals when $p\geq 1$, and at most $2n-1$ reversals when $p=0$. 
\label{thm:scenario}
\ethm

An answer to Question 4.2 is an immediate consequence of this theorem.

\bcor
Let $\pi,\pi'\in S_n$ be two permutations with pinnacle set $S$ such that $|S|=p$. Then, when $p\geq 1$ there is a sequence $T$ of at most $4n-2\min\{p,3\}$ balanced
reversals that transforms $\pi$ into $\pi'$, using only intermediate permutations with pinnacle set $S$. When $p=0$, $T$ contains at most $4n-2$ balanced
reversals.
\label{cor:repquest}
\ecor

\begin{proof}
Let $R$  be the sequence of balanced reversals needed to sort  $\pi$ according to Theorem \ref{thm:scenario}. 
Simi\-larly, let $R'=(\rho(w_1, w_1'), \rho(w_2,w_2'), 
\ldots, \rho(w_q,w_q'))$ be the sequence of balanced reversals needed to sort  $\pi'$. 
Let $T$ be the sequence made of $R$ followed by the sequence $\rho(w'_q,w_q), \rho(w'_{q-1}, w_{q-1}), \ldots,$ $\rho(w'_1,w_1)$. 
Then $T$ transforms $\pi$ into $\Id_S$ and subsequently $\Id_S$ into $\pi'$ using only balanced reversals.
The definition of a balanced reversal guarantees that all the intermediate permutations have
pinnacle set $S$.
\end{proof}

Recall that, by Definition \ref{def:rev}, in a reversal $\rho(w_1,w_2)$ the endpoints $w_1$ and $w_2$ are
in this order from left to right on $\pi$ and are distinct from $y_0,y_{p+1}$. Depending on the position of $w_1$ and $w_2$ in $\pi$, balanced 
reversals are of different types and imply 
different constraints, that need to be satisfied in order to guarantee that the reversal is balanced. 
Table~\ref{table:types} presents the different possible positions for $w_1$ and $w_2$, each defining a type.
On the rightmost column are given the constraints that $w_1, w_2$ and their adjacent elements must fulfill in
order to obtain a balanced reversal.  For instance, reversal $\rho(w_1,w_2)$ of type A.1 is obtained
when $w_1$ belongs to an ascending set of $\pi$ and $w_2$ belongs to a descending set of $\pi$. 
One further requires that the following constraints be verified: when $\N(w_2)\neq v_{j+1}$ we must have $w_1>\N(w_2)$; when
$\SP(w_1)\neq v_i$ we must have $w_2>\SP(w_1)$. 

The standard cases A.1 and B.1 are shown in Figure \ref{fig:casesA1B1}. The other cases
are obtained from A.1 or B.1  when $w_1$ or $w_2$ or both of them are a pinnacle or a dell. 
Cases denoted A.x are obtained from case A.1 only, cases B.x are obtained from B.1 only
and cases C.x are obtained from both A.1 and B.1. Symmetrical cases are identified by an ``s''.
We show below that these types form altogether the entire collection of balanced reversals.

\begin{table}[t]
{\footnotesize

\begin{tabular}{|l|l|l|}
\hline
{\bf Type}&{\bf Positions of $w_1,w_2$}&{\bf Constraints}\\ \hline

A.1&   $w_1\in \A(v_i,y_i)$, $w_2\in \D(y_j,v_{j+1})$, $i\leq j$& if $\N(w_2)\neq v_{j+1}$ then 
$w_1>\N(w_2)$ and\\ 
&&if $\SP(w_1)\neq v_i$ then $w_2>\SP(w_1)$\\ \hline

A.2&  $w_1=y_i$, $w_2\in\D(y_j,v_{j+1})$, $i\leq j$& $w_1>\N(w_2)$\\ \hline

A.2s&$w_1\in \A(v_i,y_i)$, $w_2=y_j$,  $i\leq j$ & $w_2>\SP(w_1)$\\ \hline

A.3&  $w_1=v_i$, $w_2\in \D(y_j, v_{j+1})$, $i\leq j$ & if $\N(w_2)\neq v_{j+1}$ then $w_1>\N(w_2)$ and\\  
&&if $\SP(w_1)=y_{i-1}\neq y_0$ then $w_2<\SP(w_1)$\\ \hline

A.3s&  $w_1\in \A(v_i,y_i)$, $w_2=v_{j+1}$, $i\leq j$& if $\SP(w_1)\neq v_{i}$ then $w_2>\SP(w_1)$ and\\  
&&if $\N(w_2)=y_{j+1}\neq y_{p+1}$ then $w_1<\N(w_2)$\\ \hline

B.1&  $w_1\in\D(y_{i-1}, v_i)$, $w_2\in A(v_j,y_j)$, $i\leq j$& $w_1<\N(w_2)$ and $w_2<\SP(w_1)$\\ \hline

B.2& $w_1=y_i, w_2\in\A(v_j,y_j)$, $i< j$& $\SP(w_1)=v_i, w_2<\SP(w_1), \N(w_2)\neq y_j$ and\\ 
&&$\N(w_2)<w_1$\\ \hline

B.2s& $w_1\in D(y_{i-1},v_i), w_2=y_j$, $i\leq j$& $\SP(w_1)\neq y_{i-1}, \SP(w_1)<w_2, \N(w_2)=v_{j+1}$ \\ \hline
&& and $w_1<\N(w_2)$\\ \hline

B.3&  $w_1=v_i$, $w_2\in \A(v_j, y_j)$, $i\leq j$ & $w_2<\SP(w_1)$ and\\ 
&&if $\N(w_2)=y_j\neq y_{p+1}$ then $w_1<\N(w_2)$\\ \hline

B.3s&  $w_1\in \D(y_{i-1},v_i)$, $w_2=v_j$, $i\leq j$ & $w_1<\N(w_2)$ and\\ 
&&if $\SP(w_1)=y_{i-1}\neq y_0$ then $w_2<\SP(w_1)$\\ \hline

C.1&  $w_1=v_i$, $w_2=y_j$, $i\leq j$& $\SP(w_1)\neq y_{i-1}$, $w_2>\SP(w_1)$ and $w_1>\N(w_2)$\\ \hline

C.1s&  $w_1=y_i$, $w_2=v_j$, $i\leq j$& $\N(w_2)\neq y_{j+1}$, $w_1>\N(w_2)$ and $w_2>\SP(w_1)$\\ \hline

C.2& $w_1=v_i, w_2=v_j$, $i<j$ & if $\SP(w_1)=y_{i-1}\neq y_0$ then $w_2<\SP(w_1)$ and\\ 
&& if $\N(w_2)=y_{j+1}\neq y_{p+1}$ then $w_1<\N(w_2)$\\ \hline

C.3&  $w_1=y_i, w_2=y_j$, $i<j$& $w_1>\N(w_2)$ and $w_2>\SP(w_1)$\\ \hline
\end{tabular}
}

\caption{\label{table:types} Different types of balanced reversals. Each reversal is defined by constraints on $w_1$ and $w_2$,
defining their places (middle column) and the relative orders required between some elements (rightmost column).
Recall that $w_1$ and $w_2$ are in this order from left to right on $\pi$ and are distinct from $y_0,y_{p+1}$.
Then $\SP(w_1)$ and $\N(w_2)$ always exist.}
\end{table}

%\begin{figure}
 \begin{figure}[t]
  \centering
  \fbox{
  \includegraphics[width=0.5\linewidth]{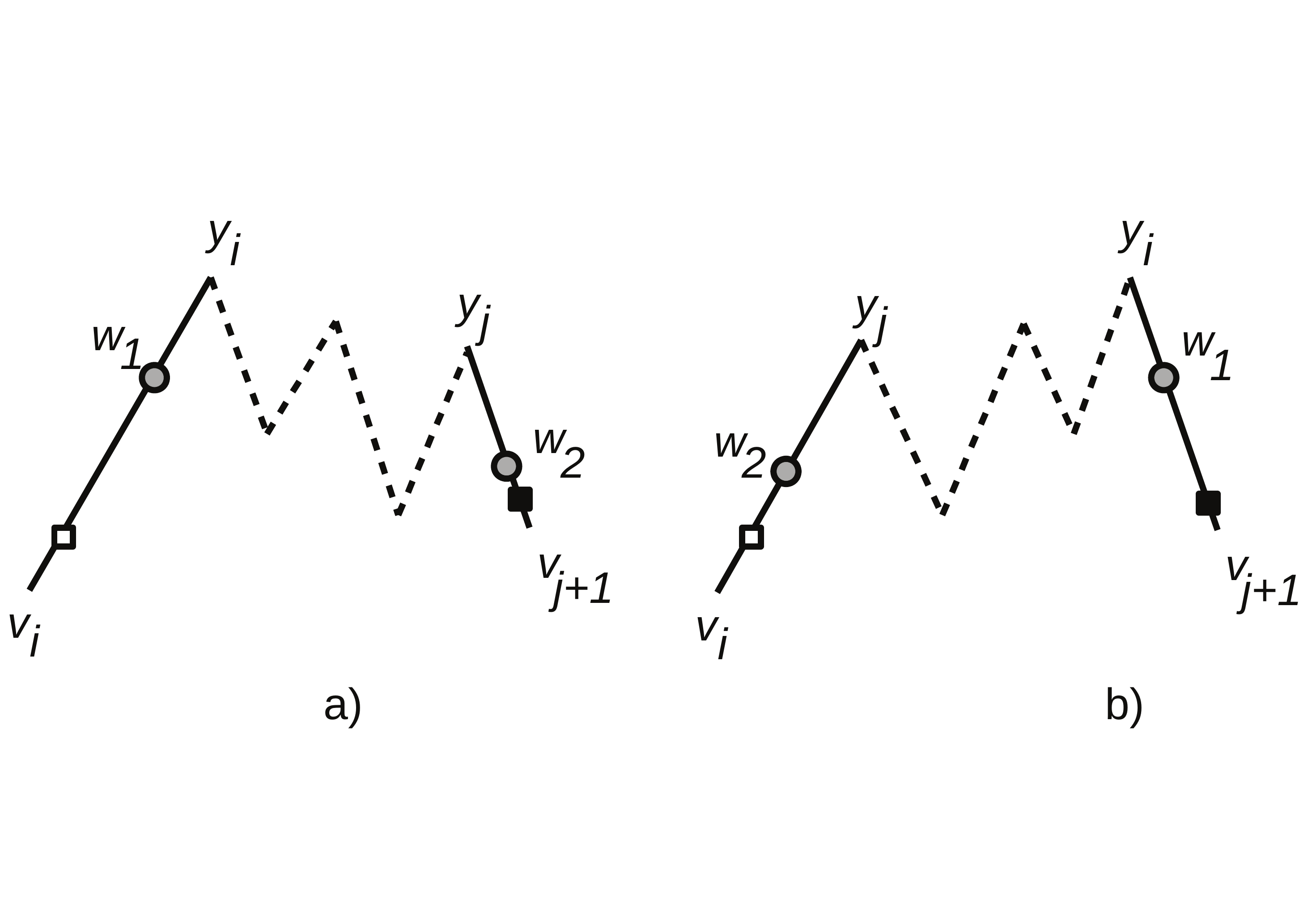}
  }
  %\vspace*{-2cm}
  %\caption{}
  %\label{fig:clauseg}
%  \end{figure}

%\begin{figure}[t]
 % \centering
  %\vspace*{2cm}
  \fbox{
  \includegraphics[width=0.5\linewidth]{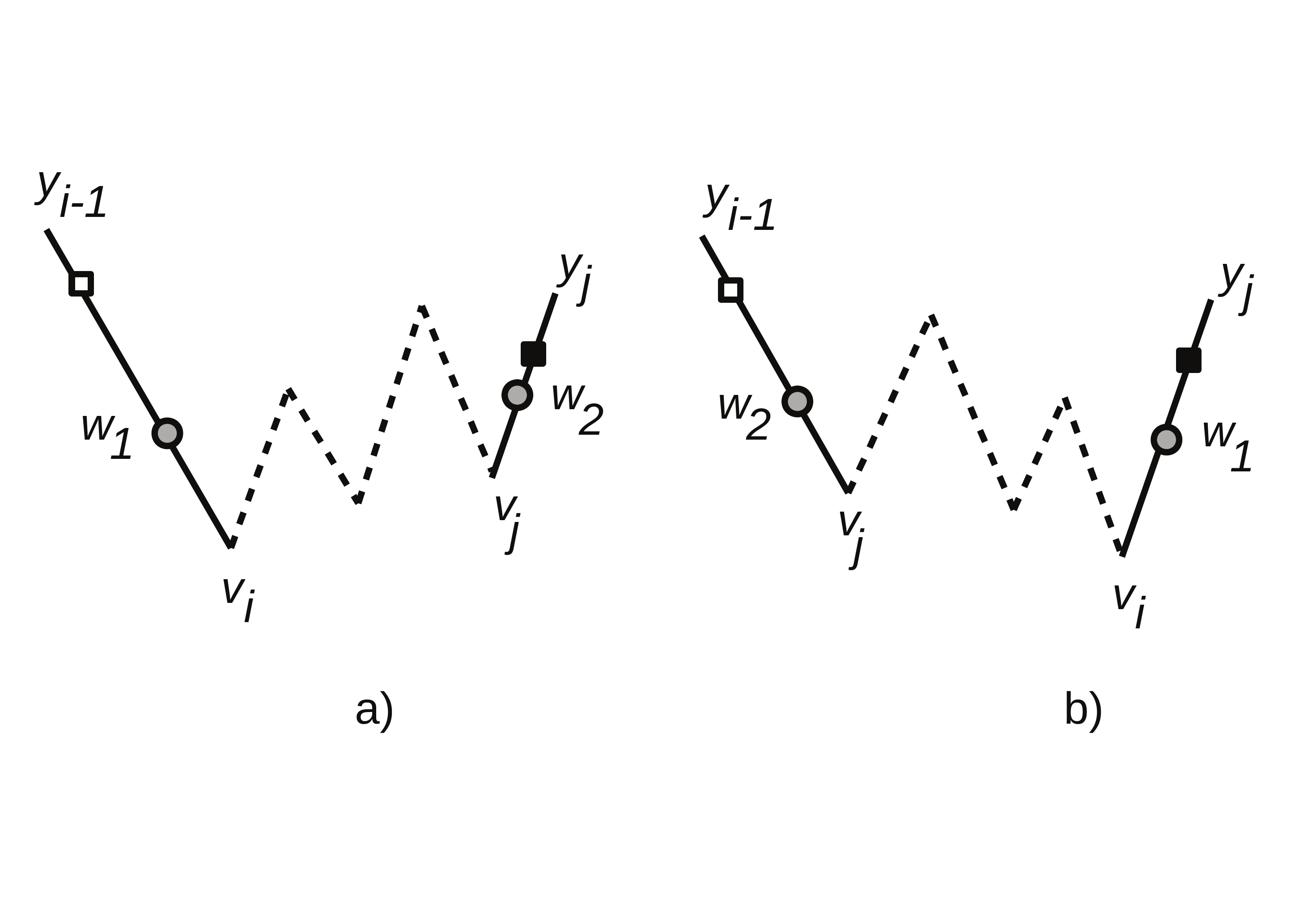}
  }
  %\vspace*{-2cm}
  %\caption{}
  %\label{fig:variableg}
  \caption{\small \label{fig:casesA1B1} Intuitive description of types A.1 (top) and B.1 (bottom) where elements are placed on 
ascending and descending regions according to their values (high or low). A consequence is that neighboring elements
on the permutation are not always at equal distance on the horizontal axis. Elements $w_1$ and 
$w_2$ are drawn as grey circles, $\SP(w_1)$ is drawn as a white square and $\N(w_2)$ is drawn as a black square. a)~Permutation $\pi$.
b)~Result once the reversal $\rho(w_1,w_2)$ is applied.}
\end{figure}%

\begin{prop}
 Reversal $\rho(w_1,w_2)$ is balanced iff it belongs to the collection of types in Table \ref{table:types}.
 \label{prop:iffbalanced}
\end{prop}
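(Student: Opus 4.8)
The plan is to reduce the statement to a purely local criterion about the four elements sitting at the boundary of the reversed block, namely $\SP(w_1)$, $w_1$, $w_2$, $\N(w_2)$, and then to obtain the table by a finite case analysis on the location of $w_1$ and $w_2$ in the shape $B_\pi$. Since ``balanced'' only constrains the pinnacle \emph{set}, the whole argument amounts to tracking, for each element, whether it is a local maximum before and after the reversal.

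First I would establish the key reduction. Writing $w_1=\pi_a$, $w_2=\pi_b$ with $a\le b$, the reversal sends the element at position $c$ inside the block to position $a+b-c$; hence every $\pi_c$ with $a<c<b$ keeps exactly the same unordered pair of neighbours $\{\pi_{c-1},\pi_{c+1}\}$, with left and right merely exchanged, while every element outside positions $a-1,\dots,b+1$ is untouched. Because being a pinnacle, $\pi_{c-1}<\pi_c>\pi_{c+1}$, depends only on this unordered pair, every strictly interior element retains its pinnacle status. The only neighbourhoods that genuinely change are those of $\SP(w_1)=\pi_{a-1}$ (whose right neighbour passes from $w_1$ to $w_2$), of $w_1$ (which moves to position $b$ and acquires neighbours $\pi_{a+1}$ and $\N(w_2)$), of $w_2$ (which moves to position $a$ and acquires neighbours $\SP(w_1)$ and $\pi_{b-1}$), and of $\N(w_2)=\pi_{b+1}$ (whose left neighbour passes from $w_2$ to $w_1$); the case $a=b$ is the identity and is trivially balanced. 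This yields the single equivalence I will use for both directions: $\rho(w_1,w_2)$ is balanced if and only if each of $\SP(w_1),w_1,w_2,\N(w_2)$ is a pinnacle after the reversal exactly when it was one before.

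Next I would turn this local criterion into the rows of Table~\ref{table:types}. Each endpoint lies in one of four kinds of location of $B_\pi$ — the interior of an ascending set $\A(v_i,y_i)$, the interior of a descending set $\D(y_j,v_{j+1})$, a pinnacle $y_i$, or a dell $v_i$ — which organises the proof into the grid of combinations labelled A.$*$, B.$*$, C.$*$. For a fixed combination I would read off the sign pattern of the two neighbours of each of the four boundary elements, using the monotonicity of the runs containing $w_1,w_2$ together with the fixed external neighbours $\SP(w_1)$ and $\N(w_2)$, and translate ``no boundary element gains or loses pinnacle status'' into explicit comparisons among $w_1,w_2,\SP(w_1),\N(w_2)$. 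For instance, in case A.1 with $w_1\in\A(v_i,y_i)$ and $w_2\in\D(y_j,v_{j+1})$, the only boundary elements that can turn into pinnacles are $w_1$ and $w_2$, and the requirement that they do not produces precisely the two conditional inequalities $w_1>\N(w_2)$ and $w_2>\SP(w_1)$ recorded in the table; the guards ``if $\N(w_2)\neq v_{j+1}$'' and ``if $\SP(w_1)\neq v_i$'' record that the corresponding comparison becomes vacuous when the external neighbour is itself a dell. Every row is obtained by the same bookkeeping, always verifying all four boundary elements and not merely $w_1,w_2$.

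The main obstacle is completeness: showing that the listed rows, with their constraints, exhaust the balanced reversals. The part that is mechanical is checking, combination by combination, that the local criterion is equivalent to the tabulated inequalities. The delicate part is handling every degenerate configuration uniformly, namely adjacent endpoints $b=a+1$, reversals whose block is confined to a single run, and the situations in which $\SP(w_1)\in\{v_i,y_{i-1}\}$ or $\N(w_2)\in\{v_{j+1},y_{j+1}\}$ coincides with a dell or a pinnacle, possibly equal to the sentinels $y_0,y_{p+1}$. These are exactly the cases in which a boundary element can silently switch between being a dell and lying on a slope, thereby creating or destroying a pinnacle at $w_1$ or $w_2$, and they are the reason the rightmost column carries guards such as ``if $\SP(w_1)=y_{i-1}\neq y_0$'' or ``if $\N(w_2)=y_{j+1}\neq y_{p+1}$''. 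I would discharge this by first tabulating, for each kind of endpoint, the before/after neighbour-sign patterns as a function of these degeneracies, and then arguing that any reversal whose four boundary elements all preserve pinnacle status matches one of the tabulated constraint sets. I expect the careful enumeration of these boundary degeneracies, and the verification that the table's guards capture them without omission or overlap, to be where essentially all of the effort lies.
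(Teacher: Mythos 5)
Your localization lemma --- that a reversal can only change the pinnacle status of the four boundary elements $\SP(w_1)$, $w_1$, $w_2$, $\N(w_2)$, since interior elements keep their unordered pair of neighbours and exterior elements are untouched --- is correct, and it is a sharper statement of what the paper's proof leaves implicit; it cleanly reduces the sufficiency direction to a finite check, which is all the paper itself offers for that direction. One local slip in your illustration of type A.1: the elements at risk of becoming pinnacles there are $\SP(w_1)$ and $\N(w_2)$, not $w_1$ and $w_2$ (after the reversal $w_1$ acquires the larger left neighbour $\N(w_1)$ and $w_2$ acquires the larger right neighbour $\SP(w_2)$, so neither can become a local maximum); the two inequalities you extract are nevertheless the correct ones.

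The genuine gap is the completeness direction, which you explicitly defer to a ``careful enumeration of boundary degeneracies.'' That enumeration is not just laborious: carried out faithfully, it would refute the statement rather than prove it, because the degenerate configurations you flag (e.g.\ $\SP(w_1)=v_i$) are exactly where balanced reversals outside Table~\ref{table:types} appear. Concretely, take $n=10$ and $\pi=(11\, 10\, 5\, 6\, 9\, 1\, 2\, 3\, 4\, 8\, 7\, 12)$ (sentinels included), with pinnacle set $\{8,9\}$, $v_1=5$, $\A(v_1,y_1)=\{6\}$, $v_2=1$, $\A(v_2,y_2)=\{2,3,4\}$. The reversal $\rho(6,3)$ has both endpoints in ascending sets and so matches no row of the table, yet it produces $(11\, 10\, 5\, 3\, 2\, 1\, 9\, 6\, 4\, 8\, 7\, 12)$, whose pinnacle set is still $\{8,9\}$: here $\SP(w_1)=v_1=5$ retains a larger left neighbour and gains the smaller right neighbour $w_2=3$, so neither $w_2$ nor $\SP(w_1)$ becomes a new pinnacle, contrary to the paper's own one-line dismissal of the both-ascending case. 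So the ``only if'' half of the proposition is false as stated, and no amount of bookkeeping will close your deferred step; the statement must either be weakened to sufficiency (which is the only direction the rest of the paper uses) or the table enlarged.
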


\begin{proof}  $''\Rightarrow''$: Several cases may appear.

If both $w_1$ and $w_2$ belong to ascending sets of the permutation, that is $w_1\in \A(v_i,y_i)$ and $w_2\in \A(v_j, y_j)$ with $i\leq j$,
then when the reversal is performed $w_2$ or $\SP(w_1)$ becomes a new pinnacle, a contradiction. A similar reasoning holds when
both $w_1$ and $w_2$ belong to descending sets of the permutation. So these cases cannot appear.

If one element among $w_1$ and $w_2$ belongs to an ascending set of the permutation, and the other one to a descending set of it, then it is easy
to check that only the conditions in type A.1. or in type B.1. guarantee that no new pinnacle is added.

If exactly one element among  $w_1$ and $w_2$ is a dell, then we necessarily have one of the types A.3, B.3, C.1 (or the symmetric ones) since
any other condition creates a new pinnacle or removes an existing one.

If exactly one element  among  $w_1$ and $w_2$ is a pinnacle, then the other one is either a dell or belongs to
an ascending or descending set of $\pi$. The former possibility necessarily leads to type C.1  (or the symmetric one).
The latter possibility results into types A.2, A.2s, B.2 or B.2s.

If both $w_1$ and $w_2$ are pinnacles, or both are dells, then we must have the conditions in types C.2 or C.3 to preserve the pinnacle set.
 
$''\Leftarrow''$: This part only requires to check, for each type,  that the pinnacle set is not modified under the indicated 
conditions. 
\end{proof}

\section{Proof of Theorem \ref{thm:scenario}}\label{sect:proofthm}

We assume below that $p\geq 1$ and postpone the case $p=0$ to Remark \ref{rem:cas0}, at the end of the section.

In order to build the sequence $R$ required in Theorem \ref{thm:scenario}, we follow three steps:

\begin{enumerate}
 \item Sort the $p$ pinnacles of $\pi$ in increasing order.
 \item Place the wished dells ({\em i.e.} the dells of $\Id_S$) as dells of $\pi$, in increasing order.
 \item Move each element belonging to an ascending or descending set of $\pi$ on the rightside of the last dell of $\pi$, in increasing
 order.
\end{enumerate}

The result of these three steps is $\Id_S$. Then $R$ is the sequence of all the balanced reversals performed during these three steps
in order to transform $\pi$ into $\Id_S$.

\br
Note that in the subsequent, when permutation $\pi$ is successively transformed using balanced reversals into some other
permutation $\pi'$, the elements of $\pi'$ are identified both by their names in $\pi'$, {\em i.e.} $\pi'_1, \pi'_2$ etc.
and by their names in $\pi$, according to the needs. Once a given task is fulfilled by applying one or several balanced reversals, the
resulting permutation is renamed as $\pi$, so that the following task begins with an initial permutation still denoted $\pi$.

\er

\subsection{Step 1: Sort the pinnacles}

This is done by successively replacing the pinnacle $y_k$, for $k=1, 2, \ldots, p-1$, by the $k$-th lowest pinnacle without modifying
the set of pinnacles. Then 
in the resulting permutation the highest pinnacle is necessarily $y_p$. The other elements are not constrained at this
step. Algorithm \ref{algo:step1} presents the balanced rotations to be performed, as identified in this subsection.

\begin{lemma}
There is a sequence of at most $p-1$ balanced reversals that transforms $\pi$ with pinnacle $S$ into $\pi^*$
with pinnacle $S$ such that $y^*_1$ is the lowest pinnacle in $\pi^*$. Moreover, when $p\geq 3$, exactly one of the two following configurations
occurs: 

\begin{itemize}
 \item[(X)] the sequence contains exactly $p-1$ balanced reversals and $y^*_p$ is the highest pinnacle in $S$.
 \item[(Y)] the sequence contains at most $p-2$ balanced reversals.
\end{itemize}

\label{lemma:pinnacle1}
\end{lemma}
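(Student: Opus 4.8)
The plan is to locate the lowest pinnacle, say $y_m$, and bring it to the leftmost pinnacle position while reversing as few blocks as possible. If $m=1$ there is nothing to do (this costs $0$ reversals), so assume $m\geq 2$; then $y_1>y_m$, since $y_m$ is the lowest pinnacle.

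First I would try the single \emph{prefix flip} $\rho(y_1,y_m)$, which is of type C.3 and reverses the order of the pinnacles $y_1,\ldots,y_m$, so that $y_m$ becomes the leftmost pinnacle while every $y_k$ with $1<k<m$ keeps its two (merely swapped) smaller neighbours and thus stays a pinnacle. Its two balancedness constraints are $y_1>\N(y_m)$ and $y_m>\SP(y_1)$; the first holds automatically, because $y_m$ is the lowest pinnacle, so $y_1>y_m>\N(y_m)$. Hence whenever $y_m>\SP(y_1)$ a single reversal suffices, which for $p\geq 3$ is at most $p-2$ and places us in configuration (Y).

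The hard part is the obstructed case $\SP(y_1)>y_m$, where the element immediately left of $y_1$ is too large to serve as the new left neighbour of $y_m$. Here I would rely on two facts. First, the global reversal $\rho(v_1,v_{p+1})$ is always balanced: it is of type C.2 and both constraint clauses are vacuous for the extreme dells $v_1$ and $v_{p+1}$ (whose outer neighbours lie in the boundary descending and ascending sets); geometrically it reverses the whole shape, hence the entire pinnacle order, sending the leftmost pinnacle to position $p$. Second, using the cutpoints $\cutD$ and $\cutA$ I can choose a dell endpoint so that the reversed block deposits, just to the left of $y_m$, an element strictly smaller than $y_m$ (the dell adjacent to $y_m$ serves this role), thereby keeping $y_m$ a pinnacle and, via the defining cutpoint inequality, guaranteeing that no pinnacle lying to the right of the block is destroyed. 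Iterating such reversals moves $y_m$ strictly leftward at each step, so at most $m-1\leq p-1$ of them are used, which establishes the count.

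The subtle point, and what I expect to be the main obstacle, is the bookkeeping behind the dichotomy: the implication to prove is that using \emph{exactly} $p-1$ reversals forces $y_p^*$ to be the maximum of $S$, the complementary bound of $p-2$ then defining (Y). I would argue that the full $p-1$ reversals are forced only in the completely obstructed configuration, that the sequence produced there is anchored by $\rho(v_1,v_{p+1})$ followed only by prefix reversals that never reach position $p$, and that in this configuration the original leftmost pinnacle must be the maximum of $S$, so the maximum is carried to position $p$ and left untouched. Proving that this structural fact is genuinely forced, and that every lighter configuration is handled within $p-2$ reversals, is where the detailed case analysis on the relative order of $\SP(y_1)$, $y_m$ and the intervening dells concentrates, and it is the step I expect to consume most of the work.
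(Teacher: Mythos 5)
Your proposal assembles the right ingredients---a single reversal when the lowest pinnacle can be brought left directly, the always-balanced global flip $\rho(v_1,v_{p+1})$, and cutpoint-guided reversals---but there are two genuine gaps. First, you misidentify the obstruction. You split on whether $y_m>\SP(y_1)$, but with the cutpoint machinery that case is not hard at all: as long as $y_m>v_1$, the element $e=\cutA(y_m,v_1,y_1)$ exists and the single reversal $\rho(\N(e),y_m)$ (type A.2s or C.3) already places $y_m$ leftmost, regardless of how large $\SP(y_1)$ is. The genuinely obstructed case is $y_m<v_1$, where \emph{no} element of $\{v_1\}\cup\A(v_1,y_1)$ lies below $y_m$ and no cutpoint is available; this is what forces the global flip and the iteration, and your iteration step ("choose a dell endpoint so that the reversed block deposits, just to the left of $y_m$, an element strictly smaller than $y_m$") is never made concrete enough to verify that the chosen reversal is balanced. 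The paper's iteration reverses $\rho(v_1,v_h)$ where $h$ is the least index $>m$ with $y_h>v_1$; the invariant that makes it terminate is that these right endpoints $h$ strictly decrease from one iteration to the next---not, as you claim, that $y_m$ moves strictly leftward at each step (a mirrored block can move $y_m$ rightward among the pinnacle positions). Without that invariant your bound "at most $m-1\le p-1$" is unsubstantiated, and even granting it, the naive count (one global flip, plus iterations, plus a final cutpoint reversal) gives $p$ rather than $p-1$; the paper has to prove separately that when the number of iterations reaches its maximum $p-2$, the last iteration itself finishes the job and the final cutpoint reversal is never needed.

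Second, the (X)/(Y) dichotomy---which is the actual content of the lemma, since it is what later yields the $2p-4$ bound---is not proved but only described as "where the detailed case analysis concentrates." Your structural guess (that the worst case is anchored by the global flip, that the original leftmost pinnacle is carried to position $p$ and is the maximum of $S$) is consistent with the paper's property (P), but establishing it requires the inequalities coming from the Case-1 hypothesis (all pinnacles to the right of $y_m$ are below $v_1$) combined with the definition of the first iteration index $h^0=p$, and none of that is carried out. As written, the proposal is a plausible plan whose two hardest steps (termination/counting of the iteration, and the forced structure in the exactly-$(p-1)$ case) are left open.
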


\begin{proof} If $y_1$ is already the lowest pinnacle, then nothing is done. Assume now the lowest pinnacle is $y_i$ with $i\neq 1$.
Several cases are possible, that we present below, before giving the algorithm.
\bigskip

\noindent {\it Case 1). $y_i<v_1$ and for all $h>i$, we have $y_h<v_1$.}

Then  $v_{p+1}<y_p<v_1$  and the balanced reversal (type C.2) $\rho(v_1,v_{p+1})$ allows to obtain a permutation 
$\pi'$ with $v'_1=v_{p+1}<v_1$. In $\pi'$: 

\begin{enumerate}
  \item[$a)$]  if $y_i=y'_1$, then we are done.
  \item[$b)$] if $y_i\neq y'_1$ and $y_i<v'_1$, then we have that $y'_{p+1}= y_1>v_1>v'_1$ and we deduce that at least one 
  pinnacle placed on the rightside of $y_i$ is larger than $v'_1$. Then $\pi'$ satisfies Case 2 below.
  \item[$c)$] if $y_i\neq y'_1$  and $y_i>v'_1$, then $\pi'$ satisfies Case 3 below.
\end{enumerate}

\noindent {\it Case 2) $y_i<v_1$ and there is $h>i$ such that $y_h>v_1$.}

Then we assume w.l.o.g. that $h$ is the minimum index with this property.
Then  $v_h<v_1$, otherwise we also have $y_{h-1}>v_h>v_1$ which contradicts the choice of $h$. Now, $\rho(v_1,v_h)$ is
a balanced reversal (type C.2) since $\SP(v_1)$ is not a pinnacle and even if it may happen that $\N(v_h)=y_h$ we
have $y_h>v_1$. Once $\rho(v_1,v_h)$ is performed, the new first dell is $v''_1=v_h$ and is smaller than $v_1$ as proved above.
Let us call $\pi''$ this new permutation, whose elements satisfy
$y''_t=y_i$ for some $t$, $y''_{h-1}=y_1, y''_h=y_h, y''_1=y_{h-1}$. In $\pi''$: 

\begin{enumerate}
\item[$a)$]  if $t=1$, then we are done.
\item[$b)$]  if $t\neq 1$ and $y_i<v''_1$, then  $\pi''$ satisfies Case 2 since $v''_1=v_h<v_1<y_1=y''_{h-1}$,
so there is at least one index as required in Case 2. The smaller such index, say $g$, satisfies $t<g<h$.
\item[$c)$] if $t\neq 1$ and $y_i>v''_1$, then $\pi''$ is in Case 3 below.
\end{enumerate}

Condition $t<g<h$ in item $b$ means that the recursivity we find here will end, as we show later (once Case 3 is presented).
\bigskip

\noindent {\it Case 3) $y_{i}>v_1$.}

Let $e=\cutA(y_i,v_1, y_1)$. Then $\rho(\N(e),y_i)$ is a balanced reversal (type A.2s if $\N(e)\neq y_1$ and type C.3 otherwise) since
$y_i>e$ and $\N(e)>y_i>\N(y_i)$, both by the definition of the cutpoint $e$. This reversal places $y_i$ as the leftmost pinnacle.
\bigskip

\begin{algorithm}[t]
\caption{\small  Permutation sorting by balanced reversals : Step 1}
\begin{algorithmic}[1]
{\small \REQUIRE A permutation $\pi\in S_n$ with pinnacle set $S$ of cardinality $p$. \\
\ENSURE The permutation $\pi$ whose pinnacles have been placed in increasing order (Proposition \ref{prop:thmstep1})  \\
\medskip

\STATE $x\leftarrow\min\{y_1, \ldots, y_p\}$ \hfill// x is the lowest pinnacle
\STATE let $y_i= x$ \hfill// $x$ has index $i$
\IF{$(i\neq 1)$ and $y_i<v_1$}
\STATE $h\leftarrow\min\{h\,|\, h>i, y_h>v_1\}\cup\{0\}$ \hfill// $h=0$ occurs when the first set is empty
\IF {$h=0$}
\STATE $\pi\leftarrow\pi\cdot\rho(v_1,v_{p+1})$\hfill// Case 1
\ENDIF  
\ENDIF
\WHILE{$x$ is not the leftmost pinnacle of $\pi$}
\STATE let $y_i= x$ \hfill// $x$ has index $i\neq 1$
\IF{$y_i<v_1$}
\STATE $h\leftarrow\min\{h\,|\, h>i, y_h>v_1\}$
\STATE $\pi\leftarrow\pi\cdot\rho(v_1,v_h)$\hfill //Case 2 
\ELSE
\STATE $e\leftarrow \cutA(y_i,v_1,y_1)$; $\pi\leftarrow\pi\cdot\rho(\N(e),y_i)$\hfill //Case 3 
\ENDIF
\ENDWHILE \hfill// the lowest pinnacle is now placed in position $y_1$
\FOR{$k=1$ to $p-2$} 
\STATE $x\leftarrow\min\{y_{k+1}, \ldots, y_p\}$; let $y_t=x$ \hfill// $x$ is the lowest remaining pinnacle
\IF{$t\neq k+1$}
\STATE $e\leftarrow\cutA(y_t,v_{k+1}, y_{k+1})$; $\pi\leftarrow\pi\cdot\rho(\N(e),y_t)$ \hfill//$x$ is now in position $y_{k+1}$
\ENDIF
\ENDFOR 
\STATE Return $\pi$}
\end{algorithmic}
\label{algo:step1}
\end{algorithm}

The algorithm consists in applying Case 1 if necessary, then Case 2 as long as the current permutation requires it
(in item $b$ of Case 2) and finally Case 3 if needed. It is presented in Algorithm~\ref{algo:step1}. In order
to compute the number of balanced reversals performed in the worst case, we denote:

\begin{itemize}
 \item $\pi$ the initial permutation
 \item $\pi^0$ the permutation obtained at the end of Case 1, whether it is applied or not (so that $\pi^0=\pi$ if not).
 \item $\pi^1, \ldots, \pi^m$ the $m$ successive permutations obtained using Case 2 ($m=0$ if Case 2 is not applied). 
 \item $\pi^{m+1}$ the permutation obtained once Case 3 is applied, if it is applied.
\end{itemize}

As a consequence, if $m>0$ then for $0\leq q\leq m-1$, permutation $\pi^{q}$ is transformed into permutation $\pi^{q+1}$ using the
balanced reversal $\rho(v^{q}_1,v^{q}_{h^{q}})$ of type C.2, as mentioned in Case 2 before. Here, $h^q$ denotes the
minimum index $h$ computed in Case 2 for each $\pi^q$, {\em i.e.} $h^0=h$ (see Case 2), $h^1=g$ (see item $b$ in Case 2),
and so on. Indices 
$h^{0}, h^{1}\ldots, h^{m-1}$  computed respectively in $\pi^{0}, \pi^{1}\ldots, \pi^{m-1}$ 
satisfy (see again Case 2 item $b$ where we show that $g<h$):

\begin{enumerate}
\item[(i)] $h^{m-1}<h^{m-2}<\ldots<h^{1}<h^{0}$
\item[(ii)] $y^{q}_{h^{q}}, y^{q}_{h^{q-1}}, \ldots, y^{q}_{h^{1}}, y^{q}_{h^{0}}$ are pinnacles in the permutation $\pi^{q}$
for each $q$ with $0\leq q\leq m-1$, in this order from left to right. Moreover, each permutation $\pi^q$ inherits the
pinnacles of the previous permutation $\pi^{q-1}$, that is, $y^{q}_{h^{s}}=y^{q-1}_{h^{s}}$ for all $0\leq s\leq q-1$ (meaning that
the pinnacles as well as their indices in the permutation are the same).
\item[(iii)] $h^{q}$ and $v^{q}_1$ satisfy the conditions of Case 2 item $b$ in $\pi^{q}$ for each $q$ with  $0\leq q\leq m-1$.
\item[(iv)] permutation $\pi^{m}$ obtained when Case 2 does not apply any longer contains all the pinnacles 
$y^{m}_{h^{m-1}},$ $y^{m}_{h^{m-2}}, \ldots, y^{m}_{h^{1}}, y^{m}_{h^{0}}$ built by the previous iterations,
in this order from left to right.
\end{enumerate}

The number of balanced reversals performed in this step strongly depends on the number $m$ of reversals
performed at worst in Case 2. By item (iv) above, $y^{m}_{h^{m}}, y^{m}_{h^{m}}, \ldots, y^{m}_{h^{1}}, y^{m}_{h^{0}}$ 
are $m$ pinnacles in the permutation $\pi^{m}$ obtained when Case 2 does not apply any longer, in this order from left to right.
The number of such pinnacles (i.e. $m$) is upper bounded by  $p-2$, since (1) $m\leq p$, and (2) when the $m$-th reversal is
applied (to $\pi^{m-1}$), at least two pinnacles exist in the block to be reversed, the current leftmost pinnacle $y^{m-1}_1$ 
and $y_i$. They are distinct, otherwise no reversal is applied. Thus $m\leq p-2$. Now:

\begin{itemize}
 \item If $m=p-2$, then the pinnacles of $\pi^{m-1}$ are necessarily, in this order from left to right, 
 $y^{m-1}_1$, $y_i$  (= $y^{m-1}_2$), $y^{m-1}_{h^{m-1}}$ (= $y^{m-1}_3$), $\ldots, y^{m-1}_{h^{0}}$ (= $y^{m-1}_{p}$).
 The last reversal due to Case 2, {\em i.e.} $\rho(v^{m-1}_1,v^{m-1}_{h^{m-1}})$, places $y_i$ as the
 leftmost pinnacle, and thus we are done. Item $a$ in Case 2 applies, and no other reversal is needed.
 Then the total number of reversals applied in Step 1 is $p-1$ when Case 1 applies and $p-2$ otherwise.
 In the latter case, configuration (Y) in the lemma occurs. The former case is fixed using property (P) below.
 
 \item If $m\leq p-3$, then we distinguish again several situations:
 \begin{itemize}
 \item When $m\leq p-4$, the total number 
 of reversals applied in Step 1 is $p-3$ when exactly one of Case 1 and Case 3 applies, and $p-2$ when both Case 1 and Case 3 apply.
 Configuration (Y) then occurs. 
 \item When $m=p-3$ and $h^0\neq p$, then as above the pinnacles of $\pi^{m-1}$
 must be  $y^{m-1}_1$, $y_i$ (= $y^{m-1}_2$), $y^{m-1}_{h^{m-1}}$ (= $y^{m-1}_3$), $\ldots, y^{m-1}_{h^{0}}$(= $y^{m-1}_{p-1}$) and
 $y^{m-1}_p$, where $y^{m-1}_p$ is the rightmost pinnacle, that is never involved in the reversals. The last reversal due to Case 2, {\em i.e.} $\rho(v^{m-1}_1,v^{m-1}_{h^{m-1}})$, places $y_i$ as the
 leftmost pinnacle, and thus we are done. Item $a$ in Case 2 applies, and no other reversal is needed.
 Then the total number of reversals in step 1 is $p-2$ when Case 1 applies and $p-3$ otherwise, yielding configuration (Y) again.
 \item When $m=p-3$ and $h^0=p$, the total number of reversals applied in Step 1 is $p-2$ when at most one of Case 1 and Case 3
 applies (configuration (Y) again), and $p-1$ when both Case 1 and Case 3 apply. The latter case is fixed using property (P) below.
\end{itemize}
 \end{itemize}

We now finish the two unresolved cases, both of which occur when Case 1 applies and $y^{m}_p=y^m_{h^0}$
(recall that the pinnacles with indices $h^0, \ldots, h^{m-1}$ are inherited from one execution of Case 2 to the
next one, by affirmation (ii) above).
\bigskip

(P) If Case 1 applies and in Case 2 we have $h_0=p$, then $y^{m}_p$ is the highest pinnacle in $\pi^{m}$.
\bigskip

Indeed, since Case 1 applies, in $\pi$ we have $y_t<v_1$, for all $t\geq i$.  Thus in $\pi'$ (see Case 1) we have
$y'_u<v'_{p+1}=v_1$ for all $u\leq s$, where $y'_s=y_i$. When $\pi'$ is renamed as $\pi^0$, we have:
\medskip

\hspace*{2cm} $y^0_u<v^0_{p+1}=v_1$ for all $u\leq s$, where $y^0_s=y_i$. \hfill (1) 
\medskip

Moreover, since Case 2 applies with $h^0=p$, we have that: 
\medskip

\hspace*{2cm} $y^0_{p}>v^0_1=v_{p+1}$ \hfill (2) 
\medskip

\hspace*{2cm}  $y^0_r<v^0_1$, for all $r$ with $s\leq r < p$. \hfill (3)
\medskip

By (1), $y^0_u<v^0_{p+1}$ and by definition $v^0_{p+1}<y^0_p$, thus $y^0_u<y^0_p$ for all $u\leq s$, where $y^0_s=y_i$.
By (2) and (3), $y^0_r<v^0_1<y^0_p$ for all $r$ with $s\leq r < p$. Thus $y^0_p$ is the highest pinnacle in $\pi^0$.
Due to affirmation (ii) above, $y^m_p$ is the highest pinnacle in $\pi^m$ and property (P) is proved.
\bigskip

Now, property (P) applies in each of the two unresolved cases and yield configuration (X). Lemma \ref{lemma:pinnacle1} is proved.
\end{proof}

Once the lowest pinnacle is placed first, {\em i.e.} it is $y_1$, each of the other pinnacles is easily placed. The
reasoning is by induction.

\begin{lemma}
Assume that $y_1, y_2, \ldots, y_k$ are the $k$ lowest pinnacles, with $k\geq 1$, and assume $y_t$ with $t\neq k+1$  is 
the next lowest pinnacle. Then there is a balanced reversal allowing to replace $y_{k+1}$ with $y_t$, which does not
modify the pinnacles $y_s$, with $s\in\{1, \ldots, k, t+1, \ldots, p\}$.
\label{lemma:pinnaclesautres}
\end{lemma}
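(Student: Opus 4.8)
The plan is to mimic the structure used in Lemma~\ref{lemma:pinnacle1} but in a much simpler situation, since now the left part of the permutation up to $y_k$ is already correctly sorted and only needs to be preserved. First I would locate the target pinnacle $y_t$, which by hypothesis is the smallest pinnacle among $y_{k+1}, \ldots, y_p$ and sits at index $t>k+1$. The key observation is that, because $y_1, \ldots, y_k$ are the $k$ lowest pinnacles, every pinnacle to the right of $y_k$ (including $y_t$ and $y_{k+1}$) is larger than all of them; in particular $y_t > y_1 \geq \ldots$, and $y_t$ is larger than its right neighbour $\N(y_t)$ and larger than the dell $v_{k+1}$ immediately preceding the current position of $y_{k+1}$.

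The main step is to pick the right reversal and verify it is balanced. Following the Case~3 mechanism of Lemma~\ref{lemma:pinnacle1}, I would set $e=\cutA(y_t, v_{k+1}, y_{k+1})$, the cutpoint of $y_t$ on the ascending set $\A(v_{k+1}, y_{k+1})$, and perform $\rho(\N(e), y_t)$. By the definition of the cutpoint, $e$ is the largest element of $\A(v_{k+1},y_{k+1})\cup\{v_{k+1}\}$ below $y_t$, so $y_t > e$ and $\N(e) > y_t > \N(y_t)$. These are exactly the inequalities required to place this reversal in type A.2s (when $\N(e)\neq y_{k+1}$) or type C.3 (when $\N(e)=y_{k+1}$, i.e.\ the cutpoint is $v_{k+1}$ itself). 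Hence, by Proposition~\ref{prop:iffbalanced}, the reversal is balanced and does not alter the pinnacle set $S$. After the reversal, $y_t$ becomes the leftmost pinnacle of the block starting just after $\N(e)$, i.e.\ it lands in position $y_{k+1}$, as desired.

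Finally I would confirm the non-interference claim: the reversal $\rho(\N(e), y_t)$ only touches the block between $\N(e)$ and $y_t$, and $\N(e)$ lies strictly to the right of $y_k$ (since $e$ belongs to the ascending set $\A(v_{k+1},y_{k+1})$ or equals $v_{k+1}$, both of which are to the right of $y_k$). Therefore none of the already-placed pinnacles $y_1, \ldots, y_k$ is moved, and the pinnacles $y_{t+1}, \ldots, y_p$ to the right of $y_t$ are likewise untouched. This establishes that $y_s$ is preserved for all $s\in\{1,\ldots,k,t+1,\ldots,p\}$.

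The part needing the most care is the boundary analysis of the cutpoint: I must argue that $e$ is well defined, that is, that $\A(v_{k+1},y_{k+1})\cup\{v_{k+1}\}$ actually contains an element below $y_t$, and that the resulting reversal endpoints respect the left-to-right orientation of Definition~\ref{def:rev}. The existence of $e$ follows because $v_{k+1}<y_t$ (as $y_t$ exceeds all lower pinnacles and hence their flanking dells), so $v_{k+1}$ itself is a valid candidate, guaranteeing the cutpoint exists and the whole construction goes through.
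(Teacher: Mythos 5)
Your proposal is correct and follows essentially the same route as the paper: the same key inequality $y_{k+1}>y_t>y_k>v_{k+1}$, the same reversal $\rho(\N(e),y_t)$ with $e=\cutA(y_t,v_{k+1},y_{k+1})$, and the same classification as type A.2s or C.3. The extra remarks on the well-definedness of the cutpoint and on non-interference with the already-placed pinnacles are details the paper leaves implicit, and they check out.
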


\begin{proof}
We notice that $y_{k+1}>y_t>y_k>v_{k+1}$. 
With  $e=\cutA(y_t,v_{k+1}, y_{k+1})$, the reversal $\rho(\N(e), y_t)$
is balanced (type A.2s if $\N(e)\neq y_{k+1}$ and type C.3 otherwise) and moves $y_t$ at the sought place. 
\end{proof}

\begin{table}[t]
{\footnotesize

\begin{tabular}{lll}
{\bf Reversal}&{\bf Permutation (once the reversal is performed)}&{\bf Remarks}\\ 
Initial &$\pi=(20\, 16\, \underline{10}\,{\overline{11}}\, \underline{6}\, 17\,{\overline{18}}\, \underline{7}\, {\overline  8}\, \underline{1}\, {\overline  3}\, \underline{2}\, {\overline  5}\, \underline{4}\, {\overline{13}}\, 12\, \underline{9}\, {\overline{15}}\, \underline{14}\, 19\, 21)$&Notations. dell: $\underline{10}$; pinnacle: $\overline{3}$\\ 
 &Step 1&\\
 $\rho(16,14)$& $\pi=(20\, \underline{14}\, {\overline{15}}\,\underline{ 9}\, 12\, {\overline{13}}\,\underline{4}\, {\overline  5}\, \underline{2}\, {\overline  3}\, \underline{1}\, {\overline  8}\, \underline{7}\, {\overline{18}}\, 17\,  \underline{6}\, {\overline{11}}\, \underline{10}\,16\,   19\, 21)$& Case 1 was applied. $x=y_4=3$.\\
 $\rho(14,7)$& $\pi=(20\, \underline{7}\, {\overline  8}\, \underline{1}\, {\overline  3}\, \underline{2}\, {\overline  5}\, \underline{4}\, {\overline{13}}\, 12\, \underline{9}\, {\overline{15}}\, \underline{14}\, {\overline{18}}\, 17\,  \underline{6}\,{\overline{11}}\, \underline{10}\,16\,   19\, 21)$& Case 2 was applied.\\
 $\rho(7,4)$&$\pi=(20\,\underline{4}\,{\overline  5}\,\underline{2}\,{\overline  3}\,\underline{1}\,{\overline  8}\,\underline{7}\,{\overline{13}}\, 12\, \underline{9}\, {\overline{15}}\, \underline{14}\, {\overline{18}}\, 17\, \underline{6}\,{\overline{11}}\,\underline{10}\,16\,   19\, 21)$&Case 2 here (on the next line too).\\
 $\rho(4,1)$&$\pi=(20\,\underline{1}\,{\overline  3}\,\underline{2}\,{\overline  5}\,\underline{4}\,{\overline  8}\,\underline{7}\,{\overline{13}}\, 12\, \underline{9}\, {\overline{15}}\, \underline{14}\, {\overline{18}}\, 17\, \underline{6}\,{\overline{11}}\,\underline{10}\, 16\,   19\, 21)$ &$3$ is now $y_1$; $5,8$ are also correct\\
 $\rho(13,11)$&$\pi=(20\,\underline{1}\,{\overline  3}\,\underline{2}\,{\overline  5}\,\underline{4}\,{\overline  8}\,\underline{7}\,{\overline{11}}\,\underline{6}\, 17\, {\overline{18}}\, \underline{14}\,{\overline{15}}\,\underline{9}\,12\, {\overline{13}}\,\underline{10}\, 16\,   19\, 21)$ &$11$ is now $y_4$\\
  $\rho(17,13)$&$\pi=(20\,\underline{1}\,{\overline  3}\,\underline{2}\,{\overline  5}\,\underline{4}\,{\overline  8}\,\underline{7}\,{\overline{11}}\,\underline{6}\, {\overline{13}}\,12\,\underline{9}\,{\overline{15}}\,\underline{14}\,{\overline{18}}\,17\,\underline{10}\, 16\,   19\, 21)$ &$13$ is now $y_5$; $15,18$ are also correct\\
&Step 2&\\
  $\rho(7,6)$&$\pi'=(20\,\underline{1}\,{\overline  3}\,\underline{2}\,{\overline  5}\,\underline{4}\,{\overline  8}\,\underline{6}\,{\overline{11}}\,\underline{7}\, {\overline{13}}\,12\,\underline{9}\,{\overline{15}}\,\underline{14}\,{\overline{18}}\,17\,\underline{10}\, 16\,   19\, 21)\doteq\pi$ &$k=3,w=6,k+1=j-1=4$\\
  $\rho(14,10)$&$\pi'=(20\,\underline{1}\,{\overline  3}\,\underline{2}\,{\overline  5}\,\underline{4}\,{\overline  8}\,\underline{6}\,{\overline{11}}\,\underline{7}\, {\overline{13}}\,12\,\underline{9}\,{\overline{15}}\,\underline{10}\,17\,{\overline{18}}\,\underline{14}\, 16\,   19\, 21)\doteq\pi$ &$k=6,w=10,k+1=j-1=7$\\
  $\rho(9,14)$&$\pi'=(20\,\underline{1}\,{\overline  3}\,\underline{2}\,{\overline  5}\,\underline{4}\,{\overline  8}\,\underline{6}\,{\overline{11}}\,\underline{7}\, {\overline{13}}\,\underline{12}\,14\,{\overline{18}}\,17\,\underline{10}\, {\overline{15}}\, \underline{9}\, 16\,   19\, 21)$ &$k=7,w=12=\SP(9)$\\
  $\rho(12,9)$&$\pi''=(20\,\underline{1}\,{\overline  3}\,\underline{2}\,{\overline  5}\,\underline{4}\,{\overline  8}\,\underline{6}\,{\overline{11}}\,\underline{7}\, {\overline{13}}\,\underline{9}\, {\overline{15}}\, \underline{10}\, 17\, {\overline{18}}\,14\, \underline{12}\,   16\,   19\, 21)\doteq\pi$&\\
&Step 3&\\
  $\rho(14,16)$&$\pi'=(20\,\underline{1}\,{\overline  3}\,\underline{2}\,{\overline  5}\,\underline{4}\,{\overline  8}\,\underline{6}\,{\overline{11}}\,\underline{7}\, {\overline{13}}\,\underline{9}\, {\overline{15}}\, \underline{10}\, 17\, {\overline{18}}\,16\, \underline{12}\,   14\,   19\, 21)$& Item a), $u=16, e=14$\\
  $\rho(12,14)$&$\pi'=(20\,\underline{1}\,{\overline  3}\,\underline{2}\,{\overline  5}\,\underline{4}\,{\overline  8}\,\underline{6}\,{\overline{11}}\,\underline{7}\, {\overline{13}}\,\underline{9}\, {\overline{15}}\, \underline{10}\, 17\, {\overline{18}}\,16\, 14\, \underline{12}\,     19\, 21)\doteq \pi$&\\
  $\rho(17,18)$&$\pi'=(20\,\underline{1}\,{\overline  3}\,\underline{2}\,{\overline  5}\,\underline{4}\,{\overline  8}\,\underline{6}\,{\overline{11}}\,\underline{7}\, {\overline{13}}\,\underline{9}\, {\overline{15}}\, \underline{10}\, {\overline{18}}\, 17\,16\, 14\, \underline{12}\,     19\, 21)$& Item b), $i=7$, $u=17, e=16$\\
  $\rho(18,18)$&$\pi'=(20\,\underline{1}\,{\overline  3}\,\underline{2}\,{\overline  5}\,\underline{4}\,{\overline  8}\,\underline{6}\,{\overline{11}}\,\underline{7}\, {\overline{13}}\,\underline{9}\, {\overline{15}}\, \underline{10}\, {\overline{18}}\, 17\,16\, 14\, \underline{12}\,     19\, 21)\doteq \pi$& trivial\\
  $\rho(17,12)$&$\pi'=(20\,\underline{1}\,{\overline  3}\,\underline{2}\,{\overline  5}\,\underline{4}\,{\overline  8}\,\underline{6}\,{\overline{11}}\,\underline{7}\, {\overline{13}}\,\underline{9}\, {\overline{15}}\, \underline{10}\, {\overline{18}}\,  \underline{12}\, 14\, 16\, 17\,     19\, 21)\doteq \pi$& Item c)\\
  &$\pi=\Id_S$& Item d) does nothing here\\ 
  \end{tabular}
}

\caption{\small \label{table:ex}Execution of Steps 1, 2 and 3 on the permutation $\pi=(20\, 16\, \underline{10}\,$ ${\overline{11}}\, \underline{6}\, 17\,{\overline{18}}\, \underline{7}\, {\overline  8}\, \underline{1}\, {\overline  3}\, \underline{2}\, {\overline  5}\, \underline{4}\,$ ${\overline{13}}\, 12\, \underline{9}\, {\overline{15}}\,$ $\underline{14}\, 19\, 21)$.
Notation $\pi'=(\ldots)\doteq \pi$ means that once $\pi'$ is computed according to the algorithm, the algorithm does not compute $\pi''$ and thus $\pi'$ is renamed $\pi$.
Notation $\pi''=(\ldots)\doteq \pi$ means that both $\pi'$ and $\pi''$ have been computed, and $\pi''$ is renamed $\pi$. }
\end{table}

\bex

Consider $\pi=(20\, 16\, \underline{10}\,{\overline{11}}\, \underline{6}\, 17\,{\overline{18}}\, \underline{7}\, {\overline  8}\, \underline{1}\, {\overline  3}\, \underline{2}\, {\overline  5}\, \underline{4}\, {\overline{13}}\, 12\, \underline{9}\, {\overline{15}}\, \underline{14}\, 19\, 21)$
Here, $n=19$, $p=7$, the dells are underlined and the pinnacles are overlined. Then $S=\{3, 5, 8, 11, 13, 15, 18\}$. The first and last elements are the bounds $y_0$ and $y_{n+1}$ that are artificially added.
Table \ref{table:ex} indicates the reversals needed to achieve step 1, according to Algorithm \ref{algo:step1}.
\eex

Lemmas \ref{lemma:pinnacle1} and \ref{lemma:pinnaclesautres} allow to deduce the following result.

\begin{prop}
There is a sequence $R_1$ of at most one reversal (when $p=2$), and at most $2p-4$ reversals (when $p\geq 3$) 
allowing to order the pinnacles of $\pi$ in increasing order.
\label{prop:thmstep1}
\end{prop}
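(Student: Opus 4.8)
The plan is to combine the two preceding lemmas into a two-phase procedure and then to count the reversals carefully according to the two configurations (X) and (Y) of Lemma \ref{lemma:pinnacle1}.

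First I would apply Lemma \ref{lemma:pinnacle1} to move the lowest pinnacle into position $y_1$; call the resulting permutation $\pi^*$. This is phase one. Then, in phase two, I would repeatedly invoke Lemma \ref{lemma:pinnaclesautres}: assuming $y_1, \ldots, y_k$ are already the $k$ lowest pinnacles, one balanced reversal places the $(k+1)$-th lowest pinnacle at position $y_{k+1}$ without disturbing $y_1, \ldots, y_k$ nor the pinnacles to the right. Iterating this for increasing $k$ sorts all the pinnacles, and the key observation is that the very last pinnacle never requires a reversal: once all positions but one carry their correct sorted value, the remaining position is forced to carry the highest pinnacle. For $p=2$, phase one uses at most $p-1=1$ reversal and phase two is empty (the highest pinnacle is automatically in $y_2$), which already gives the claimed bound of one reversal.

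For $p \geq 3$, I would split according to the configuration produced by Lemma \ref{lemma:pinnacle1}. In configuration (X), phase one uses exactly $p-1$ reversals and additionally guarantees that $y^*_p$ is already the highest pinnacle; hence phase two only has to order $y_2, \ldots, y_{p-1}$, and since both $y_1$ (lowest) and $y_p$ (highest) are already fixed, only $p-3$ applications of Lemma \ref{lemma:pinnaclesautres} are required, the penultimate one leaving the last two positions forced. The total is $(p-1)+(p-3)=2p-4$. In configuration (Y), phase one uses at most $p-2$ reversals, but now $y^*_p$ is not known to be the highest, so phase two may need up to $p-2$ reversals to order $y_2, \ldots, y_{p-1}$ (the final position being forced). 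The total is again $(p-2)+(p-2)=2p-4$. Taking the maximum over both configurations yields the bound $2p-4$.

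The main obstacle I expect is the bookkeeping that makes both configurations collapse to the same bound: the one-reversal saving in configuration (X) comes precisely from the extra guarantee that the highest pinnacle is already in place, which compensates for the extra reversal spent in phase one, whereas in configuration (Y) the cheaper phase one is balanced by a possibly more expensive phase two. Verifying that the last application of Lemma \ref{lemma:pinnaclesautres} is always superfluous — because the final position is forced once all the others are correct — is the delicate point that pins the count at $2p-4$ rather than $2p-3$.
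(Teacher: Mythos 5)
Your proposal is correct and follows essentially the same route as the paper: apply Lemma \ref{lemma:pinnacle1} for phase one, iterate Lemma \ref{lemma:pinnaclesautres} for phase two, and observe that the last position is forced (saving one reversal in general) while configuration (X) saves an additional application because the highest pinnacle is already rightmost, so that both configurations land on $(p-1)+(p-3)=(p-2)+(p-2)=2p-4$. The case $p=2$ is also handled exactly as in the paper, via the $p-1$ bound of Lemma \ref{lemma:pinnacle1} alone.
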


\begin{proof} When $p=2$, Lemma \ref{lemma:pinnacle1} guarantees that at most $p-1(=1)$ reversals are needed.

For $p\geq 3$, in configuration (X) from Lemma \ref{lemma:pinnacle1}  the leftmost pinnacle is already the highest one,
so that in Step 18 of Algorithm \ref{algo:step1} the last execution (for $k=p-2$) will find the pinnacle
$y_{p-1}$ already on its place (since $y_1, \ldots, y_{p-2}$ and $y_p$ are already correctly placed). Therefore, only $p-3$ applications of Lemma \ref{lemma:pinnaclesautres} are required 
in this case. The total number of reversals is then $(p-1)+(p-3)=2p-4$.

For $p\geq 3$, in configuration (Y) from Lemma \ref{lemma:pinnacle1}, we apply Lemma \ref{lemma:pinnaclesautres} for
each $k$ in $\{1, 2, \ldots, p-2\}$ (see Algorithm \ref{algo:step1}). The number of reversals is then at most $(p-2)+(p-2)=2p-4$.
\end{proof}

\subsection{Step 2: Place the wished dells}

Now we replace $v_1, v_2, \ldots, v_{p+1}$ respectively with the lowest, the second lowest etc. element which is not a pinnacle, 
in order to have in $\pi$ the same dells as in $\Id_S$. To this end, we need the following 
technical lemmas.

\begin{lemma}
 Assume $v_i\leq v_{q}$, with $i\leq q$, such  that $\SP(v_i)$ is not a pinnacle and satisfies $y_0\neq \SP(v_i)< y_q$.
 Then there exist two balanced reversals transforming $\pi$ into $\pi''$ such that the {\bf only} 
 differences between $\pi$ and $\pi''$ are the following ones:
  \begin{itemize}
  \item[i)] if $\SP(v_i)>v_{q}$, then $\SP(v_i)$ is moved immediately after $\cutA(\SP(v_i),v_q,y_q)$, so that  $\SP(v_i)\in A_{\pi''}(v_q,y_{q})$.
  \item[ii)] if $\SP(v_i)<v_{q}$,  $\SP(v_i)$ is moved immediately after $v_q$ and becomes $v''_q$.
  \end{itemize}
  \label{lemma:versAfinal}
 \end{lemma}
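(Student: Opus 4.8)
The plan is to realize the repositioning of $w:=\SP(v_i)$ by the classical two-reversal ``move'' of a single element: reversing an interval and then reversing a sub-interval of it reinserts $w$ at a prescribed gap while leaving the mutual order of every other element unchanged. The target gap is the one dictated by the cutpoint. In case (i) it is the slot immediately after $e:=\cutA(\SP(v_i),v_q,y_q)$, which is well defined because $v_q<w<y_q$ (the right inequality being the hypothesis $\SP(v_i)<y_q$); in case (ii) it is the slot immediately after $v_q$. First I would record the local picture: since $\SP(v_i)$ is not a pinnacle and $\SP(v_i)\neq y_0$, the element $w$ sits in a descending set, $w\in\D(y_{i-1},v_i)$, so $\SP(w)>w>v_i$ and $\N(w)=v_i$. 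This is exactly what makes $w$ movable, since after deleting it one has $\SP(w)>v_i<\N(v_i)$, so $v_i$ remains a dell and the departure site creates no new pinnacle.

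Next I would take as first reversal $\rho_1=\rho(v_i,z)$, where $z=e$ in case (i) and $z=v_q$ in case (ii); this reverses the block running from $v_i$ (the right neighbour of $w$) up to the target. I claim $\rho_1$ is balanced: it is of type B.3 when $z\in\A(v_q,y_q)$ (here $w_1=v_i$, $w_2=e$, and the constraints reduce to $e<w$, which is the cutpoint inequality, and to $v_i\le v_q<y_q$), and of type C.2 when $z=v_q$ (both endpoints dells, $i<q$). The only delicate C.2 constraint would require, \emph{if} $\SP(v_i)$ were the pinnacle $y_{i-1}$, that $v_q<\SP(v_i)$; but $\SP(v_i)=w$ is not a pinnacle by hypothesis, so this constraint is vacuous. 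This is precisely where the assumption ``$\SP(v_i)$ is not a pinnacle'' is used, and the remaining C.2 constraint is vacuous as well because $\N(v_q)$ is either an element of $\A(v_q,y_q)$ or equals $y_q$, never $y_{q+1}$.

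Finally I would set $\rho_2=\rho(w,v_i)$ inside $\pi^{(1)}:=\pi\cdot\rho_1$, reversing the interval now running from $w$ to the displaced copy of $v_i$. A direct computation shows that $\rho_1$ followed by $\rho_2$ has the net effect of deleting $w$ from its slot after $\SP(w)$ and reinserting it immediately after $z$, all other elements keeping their mutual order; this is exactly the change described in (i)/(ii). I would then check directly that $\pi''$ has pinnacle set $S$: in case (i), $w$ lands between $e$ and $\N(e)$ with $e<w<\N(e)$, hence becomes an ordinary ascending element of $A_{\pi''}(v_q,y_q)$, creating no pinnacle or dell; in case (ii), $w<v_q<\N(v_q)$, so $w$ becomes the new dell $v''_q$ while $v_q$, now followed by the smaller $w$, turns into an ordinary descending element, leaving the pinnacle set untouched. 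The useful observation is that once $\pi^{(1)}$ (obtained from the balanced $\rho_1$) and $\pi''$ both carry pinnacle set $S$, the very definition of a balanced reversal gives that $\rho_2$ is balanced, so no separate type analysis of the messier intermediate permutation $\pi^{(1)}$ is needed.

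The main obstacle is not the idea but the bookkeeping. One must verify that $\rho_1$ indeed falls under the stated types of Table~\ref{table:types} in every subcase, in particular when $\A(v_q,y_q)=\emptyset$, when $e=v_q$ (forcing C.2 rather than B.3), when $i=q$ in case (i), and when $\SP(w)$ happens to be the pinnacle $y_{i-1}$; and one must carry out the local pinnacle/dell analysis at both the departure and the arrival sites carefully enough to be certain that the composite yields exactly $\pi''$, with no extra differences. The three hypotheses ``$\SP(v_i)$ is not a pinnacle'', ``$\SP(v_i)\neq y_0$'' and ``$\SP(v_i)<y_q$'' are exactly the facts that make these checks succeed, by ensuring respectively that the dangerous C.2 constraint is vacuous, that $w$ is a genuine interior element with $\N(w)=v_i$, and that $w$ fits below $y_q$ into $\A(v_q,y_q)$.
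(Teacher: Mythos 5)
Your proof is correct and follows the same underlying idea as the paper's: move $w=\SP(v_i)$ by two nested balanced reversals whose far endpoint is determined by the cutpoint $e=\cutA(w,v_q,y_q)$ (resp.\ by $v_q$). In case ii) your two reversals coincide with the paper's. In case i) you choose a different decomposition of the same net rearrangement: the paper first applies $\rho(w,e)$ (type B.1 or B.3s) and then $\rho(e,v_i)$, whereas you first apply $\rho(v_i,e)$ (type B.3 or C.2) and then $\rho(w,v_i)$; both pairs produce the same $\pi''$, only the intermediate permutation differs, and your first reversal does satisfy the relevant constraints ($e<\SP(v_i)$ by the cutpoint, and $v_i<\N(e)$ since $\N(e)>w>v_i$). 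Your certification of the second reversal is also organized differently and is a genuine simplification: instead of classifying $\rho_2$ by type inside the messier intermediate permutation, you verify directly that $\pi''$ has pinnacle set $S$ and invoke the definition of a balanced reversal, which is logically airtight since balancedness is defined purely by comparing the pinnacle sets before and after. Two small caveats: your claim that the second C.2 constraint is ``vacuous because $\N(v_q)$ is never $y_{q+1}$'' leans on a literal reading of what is a typo in Table~\ref{table:types} (the intended condition is $\N(w_2)=y_j$); the needed inequality $v_i<\N(v_q)$ holds anyway since $\N(v_q)>v_q>w>v_i$, so nothing breaks, but you should verify it rather than dismiss it. Likewise, in the degenerate subcase $i=q$ with $e=v_i$ your first reversal is the identity $\rho(v_i,v_i)$, which is harmless but falls outside type C.2 ($i<j$ is required there); it is trivially balanced, so the count of two reversals is preserved.
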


\begin{proof}
Let $u=\SP(v_i)$. See Algorithm \ref{algo:lemma3}.

i) If $u>v_q$ then let $e=\cutA(u, v_q,y_q)$. Then $\rho(u,e)$ is a balanced reversal (type B.1 if $e\neq v_q$
or type B.3s otherwise),
since $u<\N(e)$ and  $e<u<\SP(u)$, both by the definition of the cutpoint $e$. When applied, this reversal
yields a permutation $\pi'$ where $v'_i=v_q, e\in \Dp(y'_{i-1},v'_i), \SPp(e)=\SP(u)$, 
$v'_q=v_i$ and $\Np(v_i)=u$. Then $\rho(e,v_i)$ is a balanced reversal (type B.3s if $e\neq v_q$ and type C.2 otherwise) in $\pi'$
for we have $e<\Np(v_i)=u$ by the definition of the cutpoint $e$, and $v_i<u<\SP(u)=\SPp(e)$. 
The resulting permutation $\pi''$ satisfies the conditions in the lemma.

ii) If $u<v_q$ then $\rho(v_i,v_q)$ is a balanced reversal (type C.2) since $\SP(v_i)\neq y_{i-1}$,
and we have $v_i<u<v_q<\N(v_q)$ whether $\N(v_q)=y_q$ or not. In the permutation $\pi'$ resulting once $\rho(v_i,v_q)$ is
applied, $v'_i=u$ (since $u<v_q$), $\Np(v'_i)=v_q, v'_q=v_i, \Np(v_i)=\N(v_q)$. Then $\rho(u,v_i)$ is a balanced reversal (type C.2)
in $\pi'$. To see this, we need to show that $v_i<\SP(u)$ which is true since $\SP(u)>u>v_i$ in $\pi$, and that $u<\Np(v_i)$
which is also true since $÷\Np(v_i)=\N(v_q)>v_q>u$. The permutation $\pi''$ obtained after the execution of
the reversal $\rho(u,v_i)$ satisfies the conditions in the lemma. \end{proof}

\begin{algorithm}[t]
\caption{\small ApplyLemma\ref{lemma:versAfinal}}
\begin{algorithmic}[1]
{\small \REQUIRE A permutation $\pi\in S_n$, pinnacles $v_i,v_q$ satisfying the hypothesis of Lemma \ref{lemma:versAfinal}.\\
\ENSURE The permutation  $\pi''$ obtained according to Lemma \ref{lemma:versAfinal}.   \\
\medskip
\STATE $u=\SP(v_i)$
\IF{$u>v_{q}$}
\STATE $e\leftarrow\cutA(u,v_q,y_q)$; $\pi'\leftarrow\pi\cdot\rho(u,e)$; $\pi''\leftarrow\pi'\cdot\rho(e,v_i)$; \hfill//case i)
\ELSE
\STATE $\pi'\leftarrow\pi\cdot\rho(v_i,v_q)$; $\pi''\leftarrow\pi'\cdot\rho(u,v_i)$ \hfill//case ii)
\ENDIF
\STATE Return $\pi''$}
\end{algorithmic}
\label{algo:lemma3}
\end{algorithm}

\begin{lemma}
Assume $v_i\leq v_q$, with $i\leq q$, such that $\N(v_i)$ is not a pinnacle and satisfies  
and $\N(v_i)<y_{q-1}$. Then there exist two balanced reversals 
 transforming $\pi$ into $\pi''$ such that the {\bf only} 
 differences between $\pi$ and $\pi''$ are the following ones:
  \begin{itemize}
  \item[i)] if $\N(v_i)>v_{q}$, then $\N(v_i)$ is moved immediately before  $\cutD(\N(v_i),y_{q-1},v_q)$, so that  $\N(v_i)\in D_{\pi''}(y_{q-1},v_{q})$, . 
  \item[ii)] if $\N(v_i)<v_{q}$,  $\SP(v_i)$ is moved immediately before $v_q$ and becomes $v''_q$.
  \end{itemize}
 \label{lemma:versDfinal}
\end{lemma}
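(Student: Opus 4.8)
The plan is to mirror the proof of Lemma \ref{lemma:versAfinal} under the left--right symmetry that swaps the roles of $\SP$ and $\N$, ascending and descending sets, and cutpoints on $\A$ with cutpoints on $\D$. Concretely, set $u=\N(v_i)$ (this plays the role that $\SP(v_i)$ played before), and distinguish the two cases according to whether $u>v_q$ or $u<v_q$. In each case I would exhibit two reversals whose composition moves $u$ to its target location while changing nothing else, and verify via Table \ref{table:types} that both reversals are balanced. The hypotheses $\N(v_i)$ not a pinnacle and $\N(v_i)<y_{q-1}$ are exactly the symmetric analogues of ``$\SP(v_i)$ not a pinnacle and $\SP(v_i)<y_q$'' in Lemma \ref{lemma:versAfinal}, so the same bookkeeping should carry over.

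First I would treat case i), where $u>v_q$. Let $e=\cutD(u,y_{q-1},v_q)$, the cutpoint of $u$ on the descending set $\D(y_{q-1},v_q)$. By definition of the cutpoint we have $e<u<\SP(e)$ on the $\D$ side, and $u<\N(e)$ appropriately, so that the reversal $\rho(e,u)$ (type B.1 if $e\neq v_q$, or the symmetric B-type otherwise) is balanced; applying it places $u$ into the descending block just before its cutpoint and exchanges the position of $v_i$ and $v_q$ as dells, analogously to how $\rho(u,e)$ acted in Lemma \ref{lemma:versAfinal}. Then a second balanced reversal (symmetric to $\rho(e,v_i)$) completes the move of $u$ so that $u\in D_{\pi''}(y_{q-1},v_q)$, with $v_i$ and $v_q$ correctly restored. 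I would check the balance conditions of each reversal against the constraints column of Table \ref{table:types}, using $u>v_q>e$ and the cutpoint inequalities to confirm no pinnacle is created or destroyed.

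Next I would treat case ii), where $u<v_q$. Here $\rho(v_i,v_q)$ is balanced of type C.2 (both endpoints are dells, and the condition $\N(v_i)<y_{q-1}$ together with $v_i\leq v_q$ guarantees the relative-order constraints of C.2 are met), and it swaps the dell positions of $v_i$ and $v_q$. Applying it yields a permutation $\pi'$ in which $u$ now sits adjacent to $v_i$ in its new position; a second C.2 reversal then slides $u$ immediately before the new $v_q$, making $u$ the dell $v''_q$. As in Lemma \ref{lemma:versAfinal}, I would verify the two order inequalities needed for the second reversal to be balanced, namely that $u$ is larger than its new predecessor and smaller than the element it must sit before, both following from $u<v_q$ and the original inequalities in $\pi$.

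The main obstacle I anticipate is not the algebra but the careful relabeling between $\pi$, $\pi'$, and $\pi''$: after the first reversal the names $\SP$, $\N$, $v_q$, and the cutpoint all refer to the updated permutation, and one must confirm that the second reversal's type-constraints hold \emph{in $\pi'$}, not in $\pi$. Establishing that no elements other than those listed in the lemma are disturbed requires tracking that the reversed blocks are nested or disjoint in exactly the way the A-side argument guaranteed. I would handle this by stating each intermediate permutation's relevant local structure explicitly (the values of $v'_i$, $\N'(\cdot)$, $\SP'(\cdot)$ around the affected positions), exactly as the proof of Lemma \ref{lemma:versAfinal} does, and then invoking the symmetry to shorten the verification.
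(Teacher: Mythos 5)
Your plan of mirroring Lemma \ref{lemma:versAfinal} is the right instinct, but the mirror you apply is not the one that works, and both cases break at the point where you commit to concrete reversals. In case i), you take $e=\cutD(u,y_{q-1},v_q)$ and propose the reversal with endpoints $e$ and $u$, of type B.1. First, read left to right $u=\N(v_i)$ precedes $e$, so the reversal is $\rho(u,e)$ with $w_1$ in an ascending set and $w_2$ in a descending set, i.e.\ type A.1, not B.1; your supporting inequality ``$u<\N(e)$'' is also false, since $\N(e)<e<u$ in a descending set. More importantly, $\rho(u,e)$ lands $u$ immediately \emph{after} $e$'s old position, i.e.\ just before $\N(e)$, and pushes $e$ next to $v_i$; it does not place $u$ immediately before its cutpoint, and the restoring reversal you would then need (reversing back from $e$ to the old $\N(u)$) leaves $u$ with predecessor $e<u$ and successor $\N(e)<u$, creating a new pinnacle. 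The asymmetry you are missing is that in an ascending set the moved element must land just \emph{after} its cutpoint (so Lemma \ref{lemma:versAfinal} may reverse up to $e$ itself), whereas in a descending set it must land just \emph{before} it, between $\SP(e)$ and $e$; the paper therefore uses $\rho(u,\SP(e))$ followed by $\rho(\SP(e),\N(u))$, and neither reversed block contains $v_i$ or $v_q$ --- so your claim that the first reversal ``exchanges the position of $v_i$ and $v_q$ as dells'' is an artifact of copying Lemma \ref{lemma:versAfinal} rather than adapting it.

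Case ii) has a more serious gap. The reversal $\rho(v_i,v_q)$ is of type C.2 only if, whenever $\SP(v_i)=y_{i-1}\neq y_0$, one has $v_q<y_{i-1}$. Lemma \ref{lemma:versAfinal} gets this for free because its hypothesis says $\SP(v_i)$ is \emph{not} a pinnacle; here the hypothesis constrains $\N(v_i)$ only, and $\SP(v_i)$ may perfectly well be the pinnacle $y_{i-1}$ with $v_q>y_{i-1}$, in which case $\rho(v_i,v_q)$ destroys that pinnacle. Even when it happens to be balanced, after applying it $u$ sits immediately to the left of $v_i$ with $v_i<u=\N(v_i)$, so no second reversal of type C.2 can turn $u$ into the dell $v''_q$ there: $u$ is not a dell in the intermediate permutation, and its would-be predecessor $v_i$ is smaller than it. The paper instead uses $\rho(u,\SP(v_q))$, which makes $u$ itself the new dell $v'_q$ (its new neighbours $\N(u)$ and $v_q$ both exceed $u$), followed by $\rho(\SP(v_q),\N(u))$ to unscramble the middle block. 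In short, the two reversals must be chosen afresh on the $\D$ side, with right endpoints $\SP(e)$ and $\SP(v_q)$ rather than $e$ and $v_q$; the verbatim transport of Lemma \ref{lemma:versAfinal}'s reversals does not go through.
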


\begin{proof} Let $u=\N(v_i)$. See Algorithm \ref{algo:lemma4}.

i) If $u>v_q$ then let $e=\cutD(u, y_{q-1},v_q)$ et $f=\SP(e)$. Then $\rho(u,f)$ is a balanced reversal 
(type A.1 if $f\neq y_{q-1}$, type A.2s otherwise),
since $u>e$ and  $f>u>\SP(u)$, both by the definition of a cutpoint and whether $f=y_{q-1}$ or not. 
When applied, this reversal yields a permutation $\pi'$ where 
$v'_i=v_i, \N(v'_i)=f$ with $f\in\Ap(v_i,y_{q-1})$ or $f=y_{q-1}$, as well as $v'_q=v_q$, $y'_{q-1}=y_i$, and 
$u\in \Dp(y_i,v_{q})$ with $\Np(u)=e$ and $\SPp(u)=\N(u)$. Let $t=\N(u)$. Then $t=\SPp(u)$ and $\rho(f,t)$ is a 
balanced reversal in $\pi'$ (type A.1 if $t\neq y_i$ and $f\neq y_{q-1}$, type A.2 or A.2s if exactly one equality holds
and  type C.3. otherwise). Indeed, in all types but A.2 we need to show that $t>\SPp(f)$ and this is true since $\SPp(f)=v'_i=v_i<u<t$,
because $v_i, u, t$ occur in this order on $\A(v_i,y_i)$. Moreover, types A.1, A.2 and C.3 require that $f>\Np(t)$,
which is true since $\Np(t)=u$ and $u<f$ by the definition of the cutpoint $e$. The permutation $\pi''$
resulting once $\rho(f,t)$ is performed satisfies the conditions in the lemma.

ii) If $u<v_q$ then $\rho(u,\SP(v_q))$ is a balanced reversal (type A.1 if $\SP(v_q)\neq y_{q-1}$, type A.2s otherwise).
Indeed, with the notation $s=\SP(v_q)$, in case that $s\neq y_{q-1}$ the reversal is of type A.1. and we have $\N(s)=v_q$ and $\SP(u)=v_i$, so that the conditions in
type A.1 are trivially verified. If $s= y_{q-1}$ we have to check for type A.2s that $s>\SP(u)$, which is true as
$s>v_q>u>v_i=\SP(u)$. In the permutation $\pi'$ resulting once $\rho(u,s)$ is applied,
$v'_i=v_i, \N(v'_i)=s\in\Ap(v_i,y_{q-1})$ since $s>v_q>u>v_i$, $v'_q=u$, deduced because $u<v_q$, and $\Np(u)=v_q$.
With $t=\N(u)$, we also have that $t=\SPp(u)$. Then, in $\pi'$, $\rho(s, t)$ is a balanced reversal 
(type A.1 if $t\neq y_i$ and $s\neq y_{q-1}$, type A.2 or A.2s if exactly one equality occurs, resp. type C.3 
if both equalities occur). Type A.1 is trivially verified, types A.2 and A.2s are guaranteed by
$s>v_q>u$ respectively $t=\N(u)>u=\N(v_i)>v_i$, whereas type C.3 is guaranteed by the latter two conditions
together. Once this reversal is applied, the resulting permutation $\pi''$ satisfies the lemma.

\end{proof}

\begin{algorithm}[t]
\caption{\small ApplyLemma\ref{lemma:versDfinal}}
\begin{algorithmic}[1]
{\small \REQUIRE A permutation $\pi\in S_n$, pinnacles $v_i,v_q$ satisfying the hypothesis of Lemma \ref{lemma:versDfinal}.\\
\ENSURE The  permutation  $\pi''$ obtained according to Lemma \ref{lemma:versDfinal}.   \\
\medskip
\STATE $u=\N(v_i)$
\IF{$u>v_{q}$}
\STATE $e\leftarrow\cutD(u,y_{q-1},v_q)$ \hfill//case i)
\STATE $\pi'\leftarrow\pi\cdot\rho(u,\SP(e))$; $\pi''\leftarrow\pi'\cdot\rho(\SP(e),\N(u))$
\ELSE
\STATE $\pi'\leftarrow\pi\cdot\rho(u,\SP(v_q))$ \hfill//case ii)
\STATE $\pi''\leftarrow\pi'\cdot\rho(\SP(v_q),\N(u))$
\ENDIF
\STATE Return $\pi''$}
\end{algorithmic}
\label{algo:lemma4}
\end{algorithm}

We are now able to place the dells in $\Id_S$ as dells of $\pi$, in increasing order from left to right according
to the method described in Proposition \ref{prop:thmstep2} below and its proof. 
Algorithm \ref{algo:step2} presents the approach. The continued example in Table \ref{table:ex} illustrates
it.

\begin{prop} 
Let $\pi\in S_n$ be a permutation with pinnacles $y_1< y_2< \ldots < y_p$. There is a sequence $R_2$ of at most
$2p+2$ balanced reversals that places the dells in $\Id_S$ as dells of $\pi$, 
in increasing order from left to right, without modifying the pinnacles of $\pi$ (nor their order).
\label{prop:thmstep2}
\end{prop}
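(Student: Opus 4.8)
The plan is to fix the dells one position at a time, from left to right, transforming $\pi$ into $\Id_S$ while keeping the pinnacles $y_1<\cdots<y_p$ frozen in increasing order throughout. Let $w_1<w_2<\cdots<w_{p+1}$ be the $p+1$ smallest elements of $\{1,\dots,n\}\setminus S$; these are precisely the dells of $\Id_S$, so I must reach $v_q=w_q$ for every $q$. I would argue by induction on $q$, with the invariant that $v_1=w_1,\dots,v_{q-1}=w_{q-1}$, that the pinnacles are unchanged and still increasing, and (a free consequence) that every value not among the already-placed dells is $\ge w_q$. This last point makes the next value to insert easy to locate: $w_q$ is simply the globally smallest not-yet-placed value.

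At stage $q$ I locate $w_q$. If $w_q=v_q$ there is nothing to do. If $w_q$ lies in an ascending or descending set, then, being the minimum of the unused values, it is the smallest element of that set, so $w_q=\N(v_j)$ (ascending case) or $w_q=\SP(v_j)$ (descending case); moreover the bounding dell then satisfies $v_j<w_q$, so $v_j=w_j$ is already placed and $j<q$. In the descending case I apply Lemma~\ref{lemma:versAfinal} with $v_i:=v_j$ and target $v_q$, and in the ascending case Lemma~\ref{lemma:versDfinal}. Since $v_j<w_q<v_q$, the ``case ii'' conclusion moves $w_q$ so that it becomes the dell at position $q$, in two balanced reversals, while the ``only differences'' clause keeps the placed dells, the pinnacles, and their order intact. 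The hypotheses required, $\SP(v_j)<y_q$ and $\N(v_j)<y_{q-1}$, are exactly $w_q<y_q$ and $w_q<y_{q-1}$, both guaranteed by the admissibility of $S$ (which forces sufficiently many small non-pinnacles below each pinnacle).

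The remaining case is that $w_q$ is itself a dell $v_j$; since the placed dells are smaller, necessarily $j>q$. I place it by the single type~C.2 reversal $\rho(v_q,v_j)$, which carries $w_q$ to position $q$. Its constraints are $w_q<y_{q-1}$ (required only when $\SP(v_q)=y_{q-1}$) and $v_q<y_j$; the first holds by admissibility and the second because the sorted pinnacles give $v_q<y_q<y_j$. The catch — the point I expect to need the most care — is that $\rho(v_q,v_j)$ also reverses the pinnacle block $y_q,\dots,y_{j-1}$, leaving the pinnacles out of order whenever $j>q+1$. I repair this with one companion type~C.3 reversal applied to exactly those pinnacles, which after the first reversal sit in reversed order strictly to the right of the freshly placed $v_q$: its endpoints straddle precisely this block, so it restores $y_q<\cdots<y_{j-1}$ while moving only not-yet-placed dells, never $v_1,\dots,v_q$. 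Verifying that both reversals satisfy their entries in Table~\ref{table:types}, and that the companion touches no placed dell, is the real work of this case.

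For the count, every position $q$ is settled by at most two balanced reversals — the two reversals of Lemma~\ref{lemma:versAfinal}/\ref{lemma:versDfinal} when $w_q$ is hidden in a set, or the type~C.2 reversal together with its type~C.3 companion when $w_q$ is a distant dell (and a single reversal when $w_q=v_{q+1}$, or none when $w_q=v_q$). Over the $p+1$ positions this yields at most $2(p+1)=2p+2$ balanced reversals. The induction closes because each stage leaves $v_1,\dots,v_{q-1}$ and the pinnacle order exactly as before, as recorded by the ``only differences'' statements of the two lemmas and by the straddling property of the type~C.3 companion.
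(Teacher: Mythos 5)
Your overall plan (place the dells of $\Id_S$ left to right, locate the next one as the global minimum of the unused values, and handle the ``adjacent to a placed dell'' case via item ii) of Lemmas \ref{lemma:versAfinal} and \ref{lemma:versDfinal}) coincides with the paper's proof, and your count is the same. The genuine gap is in the case where $w_q$ is itself a distant dell $v_j$ with $j>q+1$. Your companion ``type C.3 reversal applied to exactly those pinnacles'' has endpoints $y_{j-1}$ and $y_q$ (in their positions after the first reversal), and type C.3 requires $w_2>\SP(w_1)$, i.e.\ $y_q>\SPp(y_{j-1})$. But after $\rho(v_q,v_j)$ the predecessor of $y_{j-1}$ is the largest element of the old $\D(y_{j-1},v_j)$, which can lie strictly between $y_q$ and $y_{j-1}$, so the companion is not balanced in general. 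Concretely, take $n=8$, $\pi=(9\,2\,5\,3\,8\,7\,1\,4\,6\,10)$ with $S=\{5,8\}$, $q=1$, $w_1=1=v_3$. Your first reversal $\rho(2,1)$ gives $(9\,1\,7\,8\,3\,5\,2\,4\,6\,10)$ (balanced, pinnacles now in order $8,5$), but the companion $\rho(8,5)$ produces $(9\,1\,7\,5\,3\,8\,2\,4\,6\,10)$, whose pinnacle set is $\{7,8\}$: the element $7$ of the old descending set under $8$ becomes a new pinnacle. So the step you yourself flagged as ``the real work of this case'' is exactly where the argument breaks.

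The paper avoids this by choosing both endpoints with cutpoints rather than with dells/pinnacles: the first reversal is $\rho(v_{k+1},e)$ with $e=\cutA(v_{k+1},v_j,y_j)$ (so the right endpoint sits inside $\A(v_j,y_j)$ at the right height), and the second is $\rho(f',y_{k+1})$ with $f'=\Np(e')$, $e'=\cutAp(y_{k+1},v_j,y_{j-1})$, so that its left endpoint is an element of the new ascending set below $y_{j-1}$, not $y_{j-1}$ itself; the cutpoint definitions then deliver precisely the inequalities demanded by types B.3/C.2 and A.2s/C.3. In the example above this gives $\rho(7,5)$ as the second reversal, yielding $(9\,1\,5\,3\,8\,7\,2\,4\,6\,10)$ as required. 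A secondary, minor point: you justify the hypotheses $\SP(v_j)<y_q$ and $\N(v_j)<y_{q-1}$ by appealing vaguely to ``admissibility''; the paper's direct argument $w<v_{k+1}<y_k$ (the current $(k+1)$-th dell is unused, hence $\geq w$, and lies below $y_k$) is what you should use.
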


\begin{proof}
Assume the lowest  $k<p+1$ dells ($k=0$ is admitted here) are correctly placed as $v_1, v_2, \ldots, v_k$ respectively, 
and  let $w$ be the next lowest element in $\pi$ which is not a pinnacle. Then $w$ must replace $v_{k+1}$. 
We have that $w<y_k$ since $w<v_{k+1}<y_k$.

Then either $w$ is adjacent to a dell among $v_1, v_2, \ldots, v_k$,  or $w$ is itself a dell. 
In all the other cases, a smaller element would be found, contradicting the choice of $w$.
\medskip

\noindent{\it Case 1. $w$ is adjacent to a dell}

Let $w=\SP(v_j)$ or $w=\N(v_j)$ for some $j$ with $1\leq j\leq k$.
Then we use Lemma \ref{lemma:versAfinal}, respectively Lemma \ref{lemma:versDfinal} with $i=j$ and $q=k+1$.
We have that $v_j<v_{k+1}$ by the minimality of the elements $v_1, v_2, \ldots, v_k$. We also have 
$w<y_k$ as proved above, and thus $w<y_{k+1}$ by the increasing order of the pinnacles.
So the hypothesis of the appropriate Lemma is satisfied. As $w<v_{k+1}$ by the minimality of $w$,
item ii) of the lemma holds. Consequently, after two balanced reversals, $\pi''$ is the same as
$\pi$ except that $w$ has been removed from its place and has been placed before or after $v_{k+1}$ (depending
on which lemma is applied), thus becoming 
the dell $v''_{k+1}$. Then we are done in this case.
\medskip

\noindent{\it Case 2.  $w$ is  a dell}

 If $w$ is already a dell, let $w=v_j$ with $k+1<j\leq p+1$. Then $w<v_{k+1}$ by the minimality of $w$ and
 $v_{k+1}<y_j$  since $v_{k+1}<y_{k+1}\leq y_j$.
Let $e=\cutA(v_{k+1},v_j,y_j)$ and, if it exists, $f=\N(e)$. Then $\rho(v_{k+1},e)$ is a balanced reversal
(type B.3 if $e\neq v_j$ and type C.2 otherwise) since $e\neq y_j$ ($v_{k+1}$ is an intermediate value among them),  
$e<v_{k+1}<\SP(v_{k+1})$ and $v_{k+1}<f$
by the definition of the  cutpoint $e$. In the permutation $\pi'$ resulting once $\rho(v_{k+1},e)$ is applied,
$v'_{k+1}=v_j, y'_{k+1}=y_{j-1}, y'_{j-1}=y_{k+1}, v'_j=v_{k+1}$ and $y'_j=y_j$. If $k+1=j-1$, then the order of the pinnacles
does not change since the reversed block contains a unique pinnacle. In this case we are done.
Otherwise, due to $k+1< j-1$ we deduce that 
$y_{k+1}< y_{j-1}$ and thus  $v_{k+1}<y_{k+1}< y_{j-1}$. The cutpoint defined as $e'=\cutAp(y_{k+1},v_j,y_{j-1})$
satisfies then the condition $e'\in\ \{v_j\}\cup \Ap(v_j,y_{j-1})$. As a consequence, with $f'=\Np(e')$
we have that $\rho(f',y_{k+1})$ is a balanced reversal (type A.2s if $f'\neq y_{j-1}$, type C.3 otherwise).
The required conditions are fulfilled since both types need $y_{k+1}>\SPp(f')$ and this is true by the
definition of the cutpoint $e'$, which is $\SPp(f')$; and in type C.3 we moreover need $f'>\Np(y_{k+1})$
and this is true too by the definition of the cutpoint $e'$, since $f'>y_{k+1}>\Np(y_{k+1})$. 
The resulting permutation $\pi''$ has $v''_{k+1}=
v'_{k+1}=v_j$ and the pinnacles are in increasing order.
\medskip

Using the previous approach for each $k=0, 1,  \ldots, p$, we obtain a permutation still denoted $\pi$
whose pinnacles are in increasing order and whose dells are identical to those of $\Id_S$, and in
increasing order. Each $k$ requires 0 or 2 balanced reversals, depending whether $v_{k+1}$ is already correct or not,
so that at most $2p+2$ reversals are performed.
\end{proof}

Table \ref{table:ex} shows an example.

\begin{algorithm}[t]
\caption{\small Permutation sorting by balanced reversals: Step 2}
\begin{algorithmic}[1]
{\small \REQUIRE A permutation $\pi\in S_n$ with pinnacle set $S$ of cardinality $p$. The pinnacles of $\pi$ are increasingly ordered. \\
\ENSURE The permutation $\pi$ with pinnacles still in increasing order, and whose  dells have become 
equal to the dells of $\Id_S$, in increasing order  (Proposition \ref{prop:thmstep2})  \\
\medskip

\FOR{$k=0$  to $p$}
\STATE $w\leftarrow\min(\{1, 2, \ldots, n\}-S-\{v_1, \ldots, v_{k}\}$) \hfill// x is the lowest wished dell
\IF{$\exists v_j$ such that $w=\SP(v_j)$} 
\STATE $\pi\leftarrow ApplyLemma\ref{lemma:versAfinal}(\pi, v_j, v_{k+1})$
\ELSE
\IF{$\exists v_j$  such that $w=\N(v_j)$} 
\STATE $\pi\leftarrow ApplyLemma\ref{lemma:versDfinal}(\pi, v_j, v_{k+1})$
\ELSE 
\STATE let $v_j=w$ \hfill//$w$ is a dell
\IF{$j\neq k+1$}
\STATE $e\leftarrow\cutA(v_{k+1}, v_j, y_j)$; $\pi'\leftarrow\pi\cdot\rho(v_{k+1},e)$; $\pi''\leftarrow \pi'$ 
\IF{$k+1\neq j-1$}
\STATE $e'\leftarrow{\it cutA}_{\pi'}(y_{k+1},v_j, y_{j-1})$; $\pi''\leftarrow\pi'\cdot\rho(\Np(e'),y_{k+1})$; 
\ENDIF
\STATE $\pi\leftarrow \pi''$
\ENDIF
\ENDIF
\ENDIF
\ENDFOR
\STATE Return $\pi$}
\end{algorithmic}
\label{algo:step2}
\end{algorithm}

\subsection{Step 3: Move the remaining elements towards the place they occupy in $\Id_S$}

It remains to move in $\pi$ the elements from each ascending and each descending set towards the end of the permutation.

\begin{prop}
 Let $\pi\in S_n$ be a permutation with pinnacles $y_1<y_2 \ldots <y_p$, and dells  $v_1< v_2<\ldots< v_{p+1}$
 which are the $p+1$ lowest values in  $\{1, 2, \ldots, n\}\setminus S$.
 There is a sequence $R_3$ of at most $2(n-2p)-1$ balanced reversals that transforms $\pi$ into $\Id_S$.
 \label{prop:thmstep3}
\end{prop}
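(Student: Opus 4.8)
The plan is to exploit that, after Steps 1 and 2, $\pi$ already coincides with $\Id_S$ on the whole skeleton of pinnacles $y_1<\cdots<y_p$ and dells $v_1<\cdots<v_{p+1}$: the only misplaced elements are the $n-2p-1$ remaining ones (the non-pinnacles other than the $p+1$ lowest, hence all larger than $v_{p+1}$), which are scattered among the ascending sets $\A(v_i,y_i)$ and descending sets $\D(y_i,v_{i+1})$. Since in $\Id_S$ every such set is empty except the rightmost ascending set $\A(v_{p+1},y_{p+1})$, which carries all remaining elements in increasing order, Step 3 must evacuate each intermediate set into $\A(v_{p+1},y_{p+1})$ and sort it.

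First I would split the remaining elements by their value relative to $y_p$. An element below $y_p$ may legally sit in the descending block $\D(y_p,v_{p+1})$, while an element above $y_p$ (larger than every pinnacle) cannot lie below any pinnacle and must be routed to the extreme right. The target intermediate configuration is therefore $y_p$, followed by all remaining elements below $y_p$ arranged as a single decreasing run down to $v_{p+1}$, followed by the remaining elements above $y_p$ in increasing order. One final balanced reversal of that decreasing run together with $v_{p+1}$ then turns it into the sorted prefix of $\A(v_{p+1},y_{p+1})$, which joins seamlessly with the already increasing above-$y_p$ tail to produce exactly $\Id_S$.

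The core of the argument is to show that each remaining element can be placed into its slot of this configuration with at most two balanced reversals. For an element lying in some intermediate ascending or descending set, I would first use the appropriate cutpoint ($\cutD$ to insert it into the decreasing run below $y_p$, $\cutA$ to insert it into the increasing tail above $y_p$) to reverse it up to the boundary of its destination, then apply a second reversal to correct its orientation --- exactly the two-reversal mechanism of Lemmas \ref{lemma:versAfinal} and \ref{lemma:versDfinal}. The possible source positions (the rightmost ascending set, an intermediate ascending set, an intermediate or leftmost descending set, or adjacency to $y_p$) give the separate cases a)--d), each reducing to a balanced type of Table \ref{table:types} whose constraints follow from the defining inequalities of the chosen cutpoint. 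Charging at most two reversals to each of the $n-2p-1$ remaining elements, plus the single concluding reversal that sorts the decreasing run, yields at most $2(n-2p-1)+1=2(n-2p)-1$ balanced reversals.

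The main obstacle is precisely this balancedness bookkeeping: a careless reversal that drags an element past a pinnacle, or that juxtaposes two elements in the wrong relative order at the endpoints of the reversed block, immediately creates or destroys a pinnacle. The difficulty is sharpest for the elements exceeding $y_p$, which cannot be threaded through the descending run and so must be routed separately to the far right while every pinnacle along the way is preserved; this is what forces the explicit case split and the careful cutpoint choices, and it is the part I expect to demand the most delicate verification.
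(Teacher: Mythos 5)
Your overall strategy is the paper's: evacuate every intermediate ascending and descending set into the rightmost region using the two-reversal cutpoint gadgets of Lemmas \ref{lemma:versAfinal} and \ref{lemma:versDfinal}, charge two balanced reversals to each of the $n-2p-1$ non-skeleton elements, and add one concluding flip, giving $2(n-2p-1)+1=2(n-2p)-1$. Your case split by source position (rightmost ascending set, intermediate ascending sets, the block adjacent to $y_p$, the descending sets) is also exactly the paper's items a)--d).

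There is, however, one concrete step that fails as written. The intermediate configuration you aim for --- \emph{all} remaining elements below $y_p$ gathered into a single decreasing run in $\D(y_p,v_{p+1})$, with the single flip performed last --- is not reachable with the tools you invoke. An element of a descending set $\D(y_{i-1},v_i)$ with $i\le p$ is below $y_p$, so your plan routes it into $\D(y_p,v_{p+1})$; but that is a descending-to-descending move, and neither lemma provides it (Lemma \ref{lemma:versAfinal} moves $\SP(v_i)$ into an \emph{ascending} set, while Lemma \ref{lemma:versDfinal} moves $\N(v_i)$, an ascending-set element, into a descending set). Moreover, by Proposition \ref{prop:iffbalanced} no reversal with both endpoints in descending sets is balanced, so you cannot even begin such a move with a single cut at $\cutD$. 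The paper's resolution is to perform the flip $\rho(\N(y_p),v_{p+1})$ \emph{in the middle}: first empty all ascending sets into $\D(y_p,v_{p+1})$ (your cases for ascending sources), then flip that run into $\A(v_{p+1},y_{p+1})$, and only then empty the descending sets, inserting each $\SP(v_i)$ directly into the now-ascending tail via Lemma \ref{lemma:versAfinal} i), a legal descending-to-ascending move; the count is unchanged. A second, smaller slip: an element larger than $y_p$ can perfectly well sit in the leftmost descending set $\D(y_0,v_1)$, whose upper bound is the artificial $y_0=n+1$ rather than a pinnacle, so such elements are not automatically on the far right; this is harmless (the same item d) sends them to $\A(v_{p+1},y_{p+1})$, bounded by $y_{p+1}=n+2$), but it should not be assumed away.
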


\begin{proof}
This is done as follows. By hypothesis, $v_{p+1}$ is smaller than all the elements from each ascending and each descending set,
since the dells are the smallest elements that are not pinnacles.

\begin{enumerate}
 \item[a)] As long as $\N(v_{p+1})<y_p$, we use the following trick to move 
 $\N(v_{p+1})$ towards $\D(y_p,v_{p+1})$ without changing the rest of $\pi$. Let $\pi^{rev}$ be the permutation
 obtained from $\pi$ by reversing the whole $\pi$.
 Lemma \ref{lemma:versAfinal} i) may be applied to $\pi^{rev}$ with $i=q=p+1$ in order to move ${\it Prec}_{\pi^{rev}}(v_{p+1})$ 
 towards  ${\it A}_{\pi^{rev}}(v_{p+1},y_p)$ using two balanced reversals (recall that $y_p>\N(v_{p+1})={\it Prec}_{\pi^{rev}}(v_{p+1})$ by the hypothesis above).
 Now, if we apply the balanced reversals with the same endpoints in $\pi$ (without reversing the whole permutation),
 we obtain that $\N(v_{p+1})$ is moved towards $\D(y_p,v_{p+1})$ without changing the rest of $\pi$. The resulting
 permutation is still called $\pi$ and we continue. When the process is finished, each $t\in \A(v_{p+1},y_{p+1})$ satisfies $t>y_p$.
 
 \item[b)] For each $i\leq p$, as long as $\N(v_i)\neq y_i$, use Lemma \ref{lemma:versDfinal} to move $\N(v_i)$, which is smaller than $y_i$ and thus smaller than $y_p$, towards
 $\D(y_p,v_{p+1})$, without changing the rest   of $\pi$. More precisely, item i) in the lemma is used, by the minimality of $v_{p+1}$.   When this step is fi\-nished,
 we have $\A(v_i,y_i)=\emptyset$ for all $i$ with $1\leq i\leq p$.
 
 \item[c)] If $\N(y_p)\neq v_{p+1}$, the reversal $\rho(\N(y_p),v_{p+1})$ is balanced (type B.3s), since $\N(y_p)$ $<y_p<\N(v_{p+1})$ by
 the constructions in the two previous items above, and $v_{p+1}$ $<\SP(\N(y_p))=y_p$. Then, in the new permutation still denoted $\pi$,  $\D(y_p,v_{p+1})=\emptyset$.
 
 \item[d)] For each $i\leq p$, as long as $\SP(v_i)\neq y_{i-1}$, use Lemma \ref{lemma:versAfinal} to move $\SP(v_i)$ towards
 $\A(v_{p+1},y_{p+1})$ (item i) in the lemma) without changing the rest 
 of $\pi$.  When this step is 
 finished, we have $\D(v_i,y_i)=\emptyset$ for all $i$ with $0\leq i\leq p$.
\end{enumerate}

It is easy to see that the result of these transformations is $\Id_S$. Indeed, item b) ensures that $v_i$ immediately precedes $y_i$,
for each pinnacle $y_i$, $1\leq i\leq p$. Once step b) is performed, the elements in $\D(y_p, v_{p+1})$ are smaller than
$y_p$ whereas by item a) those in $\A(v_{p+1},y_{p+1})$ (if any) exceed $y_p$. The reversal in item c) thus only makes
$y_p$ adjacent to $v_{p+1}$ by concatenating the elements in $\D(y_p, v_{p+1})$ to those in $\A(v_{p+1},y_{p+1})$. Finally,
item d) ensures that each $y_i$ is adjacent to each $v_{i+1}$ for $0\leq i\leq p-1$,
by successively inserting each element in  $\D(y_{i},v_{i+1})$ into $\{v_{p+1}\}\cup \A(v_{p+1},y_{p+1})$.

As a consequence, all elements but the $p$ pinnacles and the $p+1$ dells are possibly moved in items a), b) and d) using Lemma \ref{lemma:versAfinal}
or Lemma \ref{lemma:versDfinal}, thus performing two reversals per element. Since in item c) only one reversal
is performed, the total number of reversals is at most $2(n-p-(p+1))+1=2(n-2p)-1$.
\end{proof}

See the example in Table \ref{table:ex}.

\begin{algorithm}[t]
\caption{\small Permutation sorting by balanced reversals: Step 3}
\begin{algorithmic}[1]
{\small \REQUIRE A permutation $\pi\in S_n$ with pinnacle set $S$ of cardinality $p$. The pinnacles of $\pi$ are increasingly ordered. The 
dells of $\pi$ are the same as those of $\Id_S$.\\
\ENSURE The permutation $\Id_S$, obtained after placing into their correct places the elements of $\pi$ not
yet correctly placed (Proposition \ref{prop:thmstep3})  \\
\medskip

\WHILE{$\N(v_{p+1})<y_p$}
\STATE $u\leftarrow\N(v_{p+1})$; $e\leftarrow\cutD(u,y_{p}, v_{p+1})$; $\pi'\leftarrow\pi\cdot\rho(e,u)$; $\pi''\leftarrow\pi'\cdot\rho(v_{p+1},e)$; $\pi\leftarrow\pi''$\hfill //item a)  
\ENDWHILE
\FOR{$i=1$ to $p$}  
\WHILE{$\N(v_i)\neq y_i$}
\STATE $\pi\leftarrow ApplyLemma\ref{lemma:versDfinal}(\pi, v_i, v_{p+1})$\hfill //item b)
\ENDWHILE
\ENDFOR
\IF{$\N(y_p)\neq v_{p+1}$}
\STATE $\pi\leftarrow\pi\cdot\rho(\N(y_p),v_{p+1})$ \hfill//item c)
\ENDIF
\FOR{$i=1$ to $p$}
\WHILE{$\SP(v_i)\neq y_{i-1}$}
\STATE $\pi\leftarrow ApplyLemma\ref{lemma:versAfinal}(\pi, v_i, v_{p+1})$\hfill //item d)
\ENDWHILE
\ENDFOR
\STATE Return $\pi$}
\end{algorithmic}
\label{algo:step3}
\end{algorithm}

\paragraph{\bf Proof of Theorem \ref{thm:scenario}}
The sequence $R$ obtained by concatenating the sequences $R_1, R_2, R_3$ issued from Propositions
\ref{prop:thmstep1}, \ref{prop:thmstep2} and respectively Proposition \ref{prop:thmstep3} transforms $\pi$ into $\Id_S$
as shown by these propositions. The number of balanced reversals in $R$ needs to identify three cases:
\begin{itemize}
 \item when $p=1$, Steps 1, 2, 3 respectively take at most $0$, $2p+2$ and $2(n-2p)-1$ reversals, so the
 total number is $0+(2p+2)+(2n-4p-1)=2n-2p+1$, so that with $p=1$ we have $2n-1$ reversals.
  \item when $p=2$, Steps 1, 2, 3 respectively take at most $1$, $2p+2$ and $2(n-2p)-1$ reversals, so the
 total number is $1+(2p+2)+(2n-4p-1)=2n-2p+2$, so that with $p=2$ we have $2n-2$ reversals.
  \item when $p\geq 3$, Steps 1, 2, 3 respectively take at most $2p-4$, $2p+2$ and $2(n-2p)-1$ reversals, so the
 total number is $(2p-4)+(2p+2)+(2n-4p-1)=2n-3$ reversals.
\end{itemize}

The number of reversals is therefore bounded by $2n-\min\{p,3\}$ in all cases if $p\geq 1$. $\square$

\br
In the case where $p=0$, Steps 1 and 2 in the algorithm are not performed. Step 3 reduces to items c) and d) which
perform 1 and respectively at most $2(n-1)$ reversals. The total number of reversals is thus upper bounded by $2n-1$
in this case.
\label{rem:cas0}
\er

\section{Running time}\label{sect:algo}

The algorithm for transforming $\pi$ in $\Id_S$ is the concatenation of Algorithms \ref{algo:step1}, 
\ref{algo:step2} and \ref{algo:step3} above, and is called Algorithm {\bf BalancedSorting}.  
In this section we briefly show that {\bf BalancedSorting} may be implemented in $O(n \log n)$.

The operations that need to be efficiently implemented, {\em i.e.} in $O(\log n)$ each, are easily identified.

\begin{itemize}
 \item In Algorithm \ref{algo:step1}: successively computing the first, second etc. minimum of the initial set $S$ of pinnacles (lines 1 and 19), 
computing $\cutA$ for 
a given element and a given pinnacle (lines 15 and 21), finding the leftmost pinnacle larger than a given value and with index larger 
than a given value (lines 4 and 12), computing $\N(x)$ for a given element $x$ (lines 15 and 21), performing the reversal between two given elements (lines 15 and 21).
\item In addition, in Algorithm \ref{algo:step2}: computing the first, second etc. minimum of the set of wished dells (line 2), 
computing  $\cutD$ for a given element and a given pinnacle (line 7),  computing $\SP(x)$ for a given element $x$ (lines 3, 4, 7),
deciding whether a given element is adjacent to a dell (lines 3 and 6).
\item Algorithm \ref{algo:step3} has no supplementary requirements.
\end{itemize}

These operations require to combine several efficient data structures that allow both a rapid access to the information
and an efficient update. In particular, performing a reversal is a sensitive issue since it modifies the places of many
elements, and - in the precise case we study here - swaps ascending and descending sets of the reversed block. 
In order to avoid recording all these changes one by one, the solutions proposed in literature in order to perform a 
(not necessarily balanced) reversal use three types of approaches. They are due to   Kaplan and Verbin \cite{kaplan2003efficient}
(needs $O(\sqrt{n\log n})$ time to perform a reversal whose endpoints are known), Han \cite{han2006improving} (needs $O(\sqrt{n})$ time for the same task)
and Rusu \cite{rusu2017log} (needs $O(\log n)$ time for the same task). 
The latter one, that we choose for efficiency reasons, uses so-called {\em log-lists}. Log-lists
may be assimilated, with a view to simplification, to  double-linked lists in which a collection of 
operations may be performed in $O(\log n)$. The ones we are interested in here are: given (a pointer on) $x$,
compute (the pointers on) $\SP(x)$ and $\N(x)$; change the sign (positive or negative) of all the elements of a sublist; perform a reversal
(and update the structure).

The data structure we propose for the implementation of our algorithm combines log-lists, binary search trees (BSTs, for short), arrays and pointers. The shape of the permutation $\pi$
is stored in a log-list $L$. For each pinnacle $y_i$ in the shape, two pointers $toA$ and $toD$ go towards the roots of two BSTs. BST $toA$ (respectively $toD$) 
contains the elements in $\A(v_i,y_i)$ (respectively in $\D(y_i,v_{i+1})$) represented as pairs $(\pi_a,a)$. The order between pairs is defined as the standard
(increasing) order between their left values (the elements). This implies that the pairs are also ordered according to the increasing order of their right values (the indices) in $toA$,
respectively according to the decreasing order of their right values (the indices) in $toD$. Both of them, elements and indices, are used by the algorithm.
An array $P$ of pointers contains, in this order, the pointers to the pinnacles in $L$ in increasing order of the pinnacles, followed by pointers to the
wished dells (the dells of $\Id_S$),  in increasing order of the dells. The wished dells belong either to $L$ or to one of the $2p$ BSTs pointed by the pointers $toA$, $toD$ of each pinnacle. 

With this data structure, we have:

\begin{itemize} 
 \item  All the operations except finding the leftmost pinnacle larger than a given value and with index larger 
than a given value (lines 4 and 12 of Algorithm  \ref{algo:step1}) take $O(\log n)$ time each.
This is quite easy, given the abovementioned properties of a log-list and those of a BST (searching a value in a BST, cutting a BST, merging two
BSTs take $O(\log n)$ time). It is important to notice that a reversal is performed  on the shape of the permutation (thus in the log-list $L$), 
but: 1) it cuts either $toA$ or $toD$ at each endpoint of the reversed block
(except when the endpoint is a pinnacle or a dell) and recombines the resulting BSTs (also modifying the dells at the endpoints if needed) once the reversal is performed, both in $O(\log n)$ time;  2) it swaps the roles of $toA$ and $toD$ 
for each pinnacle situated strictly inside the reversed block, and thus each reversal must be followed by a sign change (in $O(\log n)$ time) in the reversed
block of $L$ (whose meaning is that when the value of the pinnacle is positive, $toA$ and $toD$ correspond respectively to the ascending and descending
sets neighboring the pinnacle, whereas when the value is negative their roles are swapped).

\item The operation of finding the leftmost pinnacle larger than a given value and with index larger 
than a given value (lines 4 and 12 of Algorithm  \ref{algo:step1}) cannot be implemented in $O(\log n)$ time per operation. Instead,
it may be shown that all these operations (in lines 4 and 12 of Algorithm \ref{algo:step1}) may be implemented altogether in $O(n\log n)$ time.

To this end, one has to remember that in Lemma \ref{lemma:pinnacle1} when Case 2 (tested in lines 4 and 12) is applied to $y_i$,
one finds a first index $h>i$ such that $v_1<y_h$. Then we reverse the block with endpoints $v_1$ and $v_h$. Again
as proved in Lemma~\ref{lemma:pinnacle1}, the values $h^{q}$ computed similarly with $h$ in the next executions of the {\it while} loop (lines 9-17 in Algorithm \ref{algo:step1}) 
belong to the block of $\pi$ situated between $v_1$ and $y_h$, and become smaller at each execution. That means each reversal due to Case 2 reverses a proper prefix 
of the block resulting after the latest reversal. Then it is sufficient
to perform in $O(n \log n)$ time a pre-treatment of the block of the (initial) permutation $\pi$ with endpoints $v_1$ and $y_h$ (once $h$ is known). This pre-treatment affects to each dell $v_j$ with $j<i$ a pointer to
the leftmost pinnacle $y_{h'}$ such that $y_{h'}>v_j$ and $i<h'<h$, and symmetrically affects to each dell $v_j$ with $j>i$ a pointer to
the rightmost pinnacle $y_{h'}$ such that $y_{h'}>v_j$ and $h'<i$. By the previous considerations, the reversals performed by the algorithm 
do not affect the meaning of the pointers affected to the dells $v_1^{q}$ obtained during the execution of Case 2. These pointers,
computed independently from the data structure presented above, may be stored as additional information in the cells representing the dells
in the log-list. They allow to find the pinnacles $y_{h^q}^q$ in $O(1)$ time each, and thus the dells $v_h$ needed by Algorithm \ref{algo:step1} (line 13) in $O(\log n)$ time each.
\end{itemize}

\section{Conclusion}\label{sect:conclusion}

We have shown in this paper that the {\sc Balanced Sorting} problem has a solution using at most $2n-\min\{p,3\}$ 
balanced reversals (when $p\geq 1$), for each permutation of $n$ elements. This is an upper bound, but many permutations may be sorted
using less balanced reversals. Then we can ask:
\bigskip

{\bf Question 1. ({\sc Minimum Balanced Sorting})} Given $\pi\in S_n$ with pinnacle set $S$, find the minimum number of balanced reversals needed
to transform $\pi$ into $\Id_S$.
\bigskip

A similar question may be asked when transforming a permutation $\pi$ into a permutation $\pi'$ with the same pinnacle set:
\bigskip

{\bf Question 2. ({\sc Minimum Balanced Transformation})} Given $\pi,\pi'\in S_n$ with the same pinnacle set $S$, find the minimum number of balanced reversals needed
to transform $\pi$ into $\pi'$.
\bigskip

This question is different from the previous one, since the identity permutation is not necessarily an intermediate permutation in the transformation of $\pi$ into $\pi'$.

A related problem is raised by Example \ref{ex:3}. Even when a set is admissible as the pinnacle set of a 
permutation, the order of the pinnacles in the permutation is important, since some orders may be 
impossible to respect. Identifying these orders could be a way to better target the balanced reversals yielding a 
minimum sorting. 
\bigskip

{\bf Question 3.} Let $S$ be an admissible pinnacle set. Give a characterization of the total orders  
$\sigma$ on $S$ such that a permutation $\pi\in S_n$ exists whose sequence of pinnacles read from left to right is exactly 
$\sigma$. 
\bigskip

Note that Davis et al. \cite{davis2018a} study  the number of permutations on $n$ elements with a given admissible set $S$,
and give recursive formulas for it. This is a related question, which involves however supplementary combinatorial aspects
related to the elements not belonging to $S$.

\bibliographystyle{plain}
\bibliography{Pinnacle}
\end{document}